  \theoremstyle{plain}
  \newtheorem{theorem}{Theorem}
  \newtheorem{lemma}[theorem]{Lemma}  
  \newtheorem{corollary}[theorem]{Corollary}  
  \newtheorem{fact}[theorem]{Fact}
  \newtheorem{observation}[theorem]{Observation}
  \theoremstyle{definition}
  \newtheorem{problem}[theorem]{Problem}
\title{Approximating longest common substring with $k$~mismatches: Theory and practice\thanks{This is a full version of~\cite{LCSk}.}}
\author[1]{Garance Gourdel}
\author[2]{Tomasz Kociumaka}
\author[3]{Jakub Radoszewski}
\author[4]{Tatiana Starikovskaya}
\affil[1]{ENS Saclay\\
    \texttt{garance.gourdel@ens-paris-saclay.fr}
}
\affil[2]{Bar-Ilan University\\
    \texttt{kociumaka@mimuw.edu.pl}
}
\affil[3]{Institute of Informatics, University of Warsaw \& Samsung R\&D Institute\\
    \texttt{jrad@mimuw.edu.pl}
}
\affil[4]{DI/ENS, PSL Research University\\
    \texttt{tat.starikovskaya@gmail.com}
}
\date{\vspace{-5ex}}
\newtheorem*{hypothesis}{\bf Hypothesis}
\newcommand{\Oh}{\mathcal{O}}
\newcommand{\eps}{\varepsilon}
\newcommand{\lcpk}{\mathrm{LCP}_{k}}
\newcommand{\lcsk}{\mathrm{LCS}_{k}}
\newcommand{\lcske}{\mathrm{LCS}_{(1+\eps)k}}
\newcommand{\lcsak}{\mathrm{LCS}_{\tilde{k}}}
\newcommand{\sk}{\mathrm{sk}}
\newcommand{\Prob}{\mathrm{Pr}}
\newcommand{\kLCS}{\textsf{LCS with $k$ Mismatches}\xspace}
\newcommand{\kApproxLCS}{\textsf{LCS with Approximately $k$ Mismatches}\xspace}
\newcommand{\Bichromatic}{\textsf{$(1+\gamma)$-approximate Bichromatic Closest Pair}\xspace}
\newcommand{\NN}{\textsf{Approximate Near Neighbour}\xspace}
\newcommand{\twentyquestions}{\textsf{Twenty Questions}\xspace}
\newcommand{\Carole}{\mathit{Carole}}
\newcommand{\pop}{\mathit{pop}}
\newcommand{\push}{\mathit{push}}
\newcommand{\ttop}{\mathit{top}}
\newcommand{\mmid}{\mathit{mid}}
\newcommand{\Hashes}{\mathcal{H}}
\newcommand{\Collisions}{C}
\newcommand{\Projections}{\Pi}
\newcommand{\HD}{d_H}
\newcommand\restr[2]{{
  \left.\kern-\nulldelimiterspace
  #1 
  \vphantom{\big|} 
  \right|_{#2} 
}}
\newcommand{\norm}[1]{\ensuremath{\lVert#1\rVert}}
\newcommand{\ceil}[1]{\ensuremath{\lceil#1\rceil}}
\begin{document}
  \maketitle

\begin{abstract}
In the problem of the longest common substring with $k$ mismatches we are given two strings $X, Y$ and must find the maximal length $\ell$ such that there is a length-$\ell$ substring of $X$ and a length-$\ell$ substring of $Y$ that differ in at most $k$ positions. The length $\ell$ can be used as a robust measure of similarity between $X, Y$. In this work, we develop new approximation algorithms for computing $\ell$ that are significantly more efficient that previously known solutions from the theoretical point of view. Our approach is simple and practical, which we confirm via an experimental evaluation, and is probably close to optimal as we demonstrate via a conditional lower bound.
\end{abstract}

\section{Introduction}\label{sec:intro}
For decades, the edit distance and its variants remained the most relevant measure of similarity between biological sequences. However, there is strong evidence that the edit distance cannot be computed in strongly subquadratic time~\cite{DBLP:conf/stoc/BackursI15}. 
One possible approach to overcoming the quadratic time barrier is computing the edit distance approximately, and last year in the breakthrough paper Chakraborty et al.~\cite{DBLP:conf/focs/ChakrabortyDGKS18} showed a constant-factor approximation algorithm that computes the edit distance between two strings of length $n$ in time $\tilde{\Oh}(n^{2-2/7})$. Nevertheless, the algorithm is highly non-trivial and because of that is likely to be impractical. 

A different approach is to consider alignment-free measures of similarities. Ideally, we want the measure to be robust and simple enough so that we could compute it efficiently. One candidate for such a measure is the length of the longest common substring with $k$ mismatches. Formally, given two strings $X,  Y$ of lengths at most $n$ and an integer $k$, we want to find the maximal length $\lcsk(X,Y)$ of a substring of $X$ that occurs in $Y$ with at most $k$ mismatches. Computing this value constitutes the \kLCS problem.

The \kLCS problem was first considered for $k = 1$~\cite{DBLP:journals/poit/BabenkoS11,DBLP:journals/ipl/FlouriGKU15}, with current best algorithm taking $\Oh(n \log n)$ time and $\Oh(n)$ space. The first algorithm for the general value of $k$ was shown by Flouri et al.~\cite{DBLP:journals/ipl/FlouriGKU15}. Their simple approach used quadratic time and linear space. Grabowski~\cite{DBLP:journals/ipl/Grabowski15} focused on a data-dependent approach, namely, he showed two linear-space algorithms with running times $\Oh (n ((k+1) (\mathrm{LCS}+1))^k)$ and $\Oh (n^2 k/\lcsk)$, where $\mathrm{LCS}$ is the length of the longest common substring of $X$ and~$Y$ and $\lcsk$, similarly to above, is the length of the longest common substring with $k$ mismatches of $X$ and $Y$. Abboud et al.~\cite{DBLP:conf/soda/AbboudWY15} showed a $k^{1.5} n^2 / 2^{\Omega(\sqrt{(\log n)/k})}$-time randomised solution to the problem via the polynomial method. Thankachan et al.~\cite{DBLP:journals/jcb/ThankachanAA16} presented an $\Oh(n \log^k n)$-time, $\Oh(n)$-space solution for constant $k$. This approach was recently extended by Charalampopoulos et al.~\cite{DBLP:conf/cpm/Charalampopoulos18} to develop an $\Oh(n)$-time  and $\Oh(n)$-space algorithm for the case of $\lcsk = \Omega(\log^{2k+2} n)$.

On the other hand, Kociumaka, Radoszewski, and Starikovskaya~\cite{DBLP:journals/algorithmica/KociumakaRS19} showed that there is $k = \Theta(\log n)$ such that the \kLCS problem cannot be solved in strongly subquadratic time, even for the binary alphabet, unless the Strong Exponential Time Hypothesis (SETH) of Impagliazzo, Paturi, and Zane~\cite{DBLP:journals/jcss/ImpagliazzoPZ01} is false. This conditional lower bound implies that there is little hope to improve existing solutions to \kLCS. To overcome this barrier, they introduced an approximation approach to \kLCS, inspired by the work of Andoni and Indyk~\cite{substringNN}. 

\begin{problem}[\kApproxLCS]\label{pr:LCS'k}
Two strings $X, Y$ of length at most~$n$, an integer $k$, and a constant $\eps > 0$ are given. Return a substring of $X$ of length at least $\lcsk(X,Y)$ that occurs in $Y$ with at most $(1+\eps) \cdot k$ mismatches.
\end{problem}

Kociumaka, Radoszewski, and Starikovskaya~\cite{DBLP:journals/algorithmica/KociumakaRS19} also showed that for any $\eps \in (0,2)$ the \kApproxLCS problem can be solved in $\Oh (n^{1+1/(1+\eps)} \log^2 n)$ time and $\Oh (n^{1+1/(1+\eps)})$ space. Besides for superlinear space, their solution uses a very complex class of hash functions which requires $n^{4/3+o(1)}$-time preprocessing, and that is the underlying reason for the bounds on~$\eps$. In this work, we significantly improve the complexity of the \kApproxLCS problem and show the following results.

\begin{theorem}\label{th:klcs_upper}
Let $\eps > 0$ be an arbitrary constant. The \kApproxLCS problem can be solved correctly with high probability:
\begin{enumerate}[1)]
\item In $\Oh(n^{1+ 1/(1+2\eps) + o(1)})$ time and $\Oh(n^{1+ 1/(1+2\eps) + o(1)})$ space assuming a constant-size alphabet;
\item In $\Oh(n^{1+1/(1+\eps)} \log^3 n)$ time and $\Oh(n)$ space for alphabets of arbitrary size. 
\end{enumerate}
\end{theorem}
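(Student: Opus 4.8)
The plan is to combine three ingredients: (i) a reduction from the optimization problem to a collection of decision problems of the form "is there a length-$\ell$ substring of $X$ that occurs in $Y$ with at most $(1+\eps)k$ mismatches, given that $\lcsk(X,Y)\ge \ell$?", one for each candidate length $\ell$; (ii) a locality-sensitive-hashing / sketching scheme that, for a fixed $\ell$, hashes all length-$\ell$ windows of $X$ and $Y$ so that a pair at Hamming distance $\le k$ collides with good probability while a pair at distance $>(1+\eps)k$ collides with small probability; and (iii) an efficient way to actually enumerate and hash all $\Oh(n)$ windows of a given length without spending $\Theta(n\ell)$ time per length. First I would set up the reduction: by exponential search over $\ell$ (or binary search, at the cost of logarithmic factors) it suffices to solve $\Oh(\log n)$ decision instances, and one also has to be slightly careful that the procedure returns an \emph{actual} occurrence witnessing length at least $\lcsk$, so the hashing step should be paired with a verification step (computing the Hamming distance of a candidate pair up to $(1+\eps)k+1$ in $\Oh(k)$ or $\Oh(\ell)$ time via kangaroo jumps on a suffix tree / LCE queries).

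\textbf{The core of the two bounds.} For part~2 (arbitrary alphabet, near-linear space) the plan is to use the classical bit-sampling LSH for Hamming space: project each length-$\ell$ window onto a random subset of $\Theta(\ell/k)$ coordinates, which separates distance $k$ from distance $(1+\eps)k$ with a constant probability gap; amplifying by concatenating $\Oh(\log n)$ independent projections and repeating $\Oh(n^{\rho})$ times with $\rho = 1/(1+\eps)$ gives the near-neighbour guarantee. The subtlety is that naively materializing a projected window costs $\Theta(\ell/k)$ time, i.e.\ up to $\Theta(n^2/k)$ total; to avoid this I would hash the sampled coordinates using a rolling/polynomial fingerprint over the positions of $X$ and $Y$, so that the fingerprint of window $i{+}1$ is obtained from that of window $i$ in $\Oh(1)$ amortized time (the sampled coordinate set is fixed across all windows in one repetition, so one precomputes prefix fingerprints restricted to the sampled positions). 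This is where the $\log^3 n$ factor comes from: $\Oh(\log n)$ for binary/exponential search over $\ell$, and $\Oh(\log^2 n)$ from the amplification and from handling the $\Oh(n^{\rho})$ hash tables and collision resolution. Space stays $\Oh(n)$ because we process one repetition at a time and never store more than a constant number of fingerprint arrays.

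\textbf{The constant-alphabet improvement to $\rho = 1/(1+2\eps)$.} Here the plan is to replace plain bit-sampling LSH with a better near-neighbour data structure tailored to Hamming space over a small alphabet — essentially the filtering/locality-sensitive-filter approach that achieves $\rho = 1/(2c-1)$ for approximation factor $c = 1+\eps$, i.e.\ exponent $1/(1+2\eps)$. Concretely, one partitions each window into $\Oh(1)$-size blocks, and for constant alphabet each block has $\Oh(1)$ possible values, so a window at distance $\le k$ from another differs in $\le k$ blocks; a recursive / tensored filtering scheme on blocks then yields the improved exponent. The price is that this structure is not space-efficient — it stores $\Oh(n^{1+\rho+o(1)})$ buckets — hence the matching space bound in part~1, and the $n^{o(1)}$ factors absorb the subpolynomial overheads of the filter construction. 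Again everything is wrapped in the $\Oh(\log n)$-instance decision-to-optimization reduction and an $\Oh(\ell)$- (or $\Oh(k)$-)time per-candidate verification using LCE queries on the generalized suffix tree of $X$ and $Y$.

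\textbf{Main obstacle.} The hardest part is \emph{not} the LSH analysis itself (which is standard) but making the per-length work subquadratic: one must hash $\Theta(n)$ windows of length up to $\Theta(n)$ for each of $\Oh(\log n)$ lengths and $\Oh(n^{\rho})$ repetitions, and a window explicitly has $\Theta(\ell)$ symbols, so without the rolling-fingerprint trick the total cost blows up to $\Theta(n^{2+\rho})$. I expect the technical heart of the proof to be the lemma that a random-coordinate-restricted fingerprint of all length-$\ell$ windows can be computed in $\Oh(n)$ time and $\Oh(n)$ space per repetition, together with the argument that replacing exact projected values by $\Oh(\log n)$-bit fingerprints changes collision probabilities only by a $1/\mathrm{poly}(n)$ additive term, so the "with high probability" guarantee survives the union bound over all $\Oh(n)$ windows, $\Oh(\log n)$ lengths, and $\Oh(n^{\rho})$ repetitions.
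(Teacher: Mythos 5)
Your high-level skeleton (reduce to $\Oh(\log n)$ decision instances, apply a bit-sampling LSH for part~2 and a stronger near-neighbour structure for part~1, verify candidate pairs) is the same as the paper's, and your exponent calculations are right. However, there are three concrete gaps. First, the rolling-fingerprint claim is wrong: if $h=(\pi_{a_1},\dots,\pi_{a_m})$ projects each window onto a fixed set of \emph{relative} offsets, then the fingerprint of window $i$ is $\sum_j r^{j-1}X[i+a_j-1]$, and sliding from $i$ to $i+1$ changes \emph{every} sampled character — there is no $\Oh(1)$ update, and ``prefix fingerprints restricted to the sampled positions'' does not help because the sampled \emph{absolute} positions differ for every window. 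The paper instead observes that $\{\varphi(h(X[i,i+\ell-1]))\}_i$ is a correlation of $X$ with a coefficient vector and computes it by one FFT per hash function in $\Oh(n\log n)$ time; this is the actual mechanism behind the sub-quadratic bound and must replace your recurrence. Second, verifying each collision ``in $\Oh(k)$ or $\Oh(\ell)$ time via kangaroo jumps'' is too slow: with $\Theta(nL)=\Theta(n^{1+1/(1+\eps)})$ collisions to test, an $\Oh(k)$-time check yields $\Oh(n^{1+1/(1+\eps)}k)$, which exceeds the claimed $\Oh(n^{1+1/(1+\eps)}\log^3 n)$ whenever $k=\omega(\log^3 n)$. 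The paper avoids this by testing each collision with a Johnson--Lindenstrauss sketch of dimension $\Oh(\log n)$ (so $\Oh(\log n)$ per collision), and reserves the $\Oh(\ell)$ exact Hamming test for a \emph{single} uniformly random collision. Third, you never bound how many collisions you look at: even after LSH, the total number of colliding pairs can be $\Theta(n^2)$ (e.g., highly periodic strings). The paper handles this with a two-test design — it examines an arbitrary subset of at most $4nL$ collisions with sketches, and in addition draws one collision uniformly at random and checks it exactly, using the argument (Lemma~\ref{lm:bad_collisions}) that if $|\Collisions^{\Hashes}_{\ell}|>4nL$ then at least half of the collisions, in expectation, have Hamming distance $\le(1+\eps)k$, so a random one succeeds with probability $\ge 1/2$. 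Without such a cap and fallback, your plan either tests too many pairs or risks missing the witness.

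Two smaller mismatches. For part~1 you describe a bespoke block-filtering scheme on small-alphabet windows; the paper instead first embeds via $\mu$ into binary, reduces dimension to $\Oh(\log^{1+o(1)}n)$ by a random $\pm1$ matrix (Corollary~\ref{cor:dim_reduction}, computed by FFT in $\Oh(\sigma n\log^2 n)$ time — this is exactly where the constant-alphabet assumption enters), and then plugs the sketched points into the Andoni--Razenshteyn $(c,r)$-\NN structure with $\rho=1/(2c^2-1)$ for $c\approx\sqrt{1+\eps}$; your exponent $1/(1+2\eps)$ comes out the same, but you are missing the dimension-reduction step, which is what keeps the $d$-dependent factors of the \NN structure subpolynomial. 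Finally, the reduction from optimization to decision is not a plain binary search because the decision routine errs with \emph{constant} probability and the promise region $(\lcsk,\lcske]$ may answer either way; the paper models this as a \twentyquestions game with lies and uses the Dhagat--G\'acs--Winkler strategy. Your alternative (amplify each decision to failure $o(1/\log n)$ and then binary-search) would also work at an extra $\Oh(\log\log n)$ factor, so this one is a stylistic difference rather than a gap.
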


Our first solution uses the Approximate Nearest Neighbour data structure~\cite{DBLP:conf/stoc/AndoniR15} as a black box. The definition of this data structure is extremely involved, and we view this result as more of a theoretical interest.
On the other hand, our second solution is simple and practical, which we confirm by experimental evaluation (see Section~\ref{sec:implem} for details). 

As a final remark, we note that a construction similar to the one used to show a lower bound for the \kLCS problem~\cite{DBLP:journals/algorithmica/KociumakaRS19} gives a lower bound for \kApproxLCS. 

\begin{fact}\label{lm:klcs_lower}
Assuming SETH, for every constant $\delta > 0$, there exists a constant $\eps = \eps(\delta)$\footnote{Here $\delta$ is a function of $\eps$ for which the explicit form is not known (a condition inherited from~\cite{DBLP:journals/corr/abs-1803-00904}).} such that any randomised algorithm that solves the \kApproxLCS problem for given $X$ and $Y$ of length at most $n$ correctly with constant probability uses $\Omega(n^{2-\delta})$ time. 
\end{fact}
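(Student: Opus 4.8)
The plan is to combine two ingredients: the SETH-hardness of approximate closest pair due to Rubinstein~\cite{DBLP:journals/corr/abs-1803-00904}, and the string gadget underlying the \kLCS lower bound of Kociumaka, Radoszewski, and Starikovskaya~\cite{DBLP:journals/algorithmica/KociumakaRS19}. Rubinstein's theorem, in the form we need, states that under SETH, for every $\delta_0 > 0$ there is a constant $\gamma = \gamma(\delta_0) > 0$ such that no randomised algorithm with constant success probability decides the \Bichromatic problem under Hamming distance in $\Oh(N^{2-\delta_0})$ time; here the input is a pair of sets $A,B$ of $N$ binary vectors of dimension $d = N^{o(1)}$ together with a threshold $t$, and one must distinguish $\min_{a\in A, b\in B}\HD(a,b) \le t$ from $\min_{a\in A, b\in B}\HD(a,b) > (1+\gamma)t$. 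Given the target $\delta$, I would invoke this with $\delta_0 := \delta/2$ and fix the approximation constant of the \kApproxLCS instance to $\eps := \gamma(\delta/2)$.

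From an instance $(A,B,t)$ I would construct two binary strings $X, Y$ exactly as in~\cite{DBLP:journals/algorithmica/KociumakaRS19}, with the gadget parameters scaled up by a constant so that its ``separating'' guarantee is robust to a multiplicative slack of $(1+\eps)$ rather than being tight. Concretely, each vector $v$ becomes a block $\mathrm{enc}(v)$ of length $L = \Theta(d)$ with $\HD(\mathrm{enc}(u),\mathrm{enc}(v)) = \Theta(\HD(u,v))$; the strings $X$ and $Y$ list the blocks $\mathrm{enc}(a_i)$, respectively $\mathrm{enc}(b_j)$, in order, separated by long filler blocks of length $\omega(k)$ and padded at the ends of $Y$ so that every $\mathrm{enc}(a_i)$ can be aligned exactly against every $\mathrm{enc}(b_j)$; and $k := \Theta(t)$ is chosen so that $\HD(\mathrm{enc}(a_i),\mathrm{enc}(b_j)) \le k \iff \HD(a_i,b_j) \le t$ and $\HD(\mathrm{enc}(a_i),\mathrm{enc}(b_j)) > (1+\eps)k \iff \HD(a_i,b_j) > (1+\gamma)t$. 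The key structural property of this family of gadgets, established in~\cite{DBLP:journals/algorithmica/KociumakaRS19}, is that every length-$L$ substring of $X$ that occurs in $Y$ with at most $(1+\eps)k$ mismatches must be a full block $\mathrm{enc}(a_i)$ aligned exactly onto a full block $\mathrm{enc}(b_j)$; consequently $\lcske(X,Y) \ge L$ if and only if some block-to-block aligned pair matches within $(1+\eps)k$ mismatches, and the same holds for $\lcsk$ with threshold $k$. Both strings have length $|X|,|Y| \le n = \Theta(Nd) = N^{1+o(1)}$, and the reduction is deterministic and runs in near-linear time.

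Correctness of the reduction is then a short calculation. If $\min_{a,b}\HD(a,b) \le t$, then the block-aligned witnessing pair gives $\lcsk(X,Y) \ge L$, so any correct \kApproxLCS algorithm must return a substring of $X$ of length at least $L$ occurring in $Y$ with at most $(1+\eps)k$ mismatches. If instead $\min_{a,b}\HD(a,b) > (1+\gamma)t$, then every block-to-block aligned pair mismatches in more than $(1+\eps)k$ positions, so by the structural property $\lcske(X,Y) < L$ and the algorithm cannot return a substring of length $\ge L$ meeting its guarantee. Hence checking whether the returned substring has length $\ge L$ decides the \Bichromatic instance, and composing this test with the (deterministic) reduction turns any randomised constant-error \kApproxLCS algorithm into one for \Bichromatic with the same error. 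An $\Oh(n^{2-\delta})$-time \kApproxLCS algorithm would therefore decide \Bichromatic in time $\Oh(N^{(1+o(1))(2-\delta)}) = \Oh(N^{2-\delta_0})$ for all sufficiently large $N$, contradicting SETH.

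The only nontrivial step is the structural ``separating'' property used above, namely ruling out \emph{all} spurious long, cheap alignments --- those coming from nonzero shifts and from length-$L$ windows that straddle block boundaries. This is exactly the technical core of the \kLCS lower bound of~\cite{DBLP:journals/algorithmica/KociumakaRS19}, and I expect it to be the main work here; however, it needs only to be re-verified with the filler and encoding lengths inflated by a constant factor, since the $(1+\eps)$ slack merely raises the mismatch thresholds that the argument must beat and does not change its structure.
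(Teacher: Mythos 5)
Your proposal follows essentially the same approach as the paper: a reduction from Rubinstein's \Bichromatic hardness, encoding vectors as blocks separated by a long filler gadget in $X$ and $Y$, and relying on a structural lemma that any long substring matching with at most $(1+\eps)k$ mismatches must align a block onto a block (the paper's Lemma~\ref{lm:bichro_lcs}). The only cosmetic differences are that the paper uses the \emph{output} form of \Bichromatic, iterating $k$ over powers of $(1+\eps)$ with $\eps=\gamma/3$ to absorb the binary-search slack (where you assume a given threshold and set $\eps=\gamma$), and the paper proves the separating property explicitly, first over a quaternary alphabet with $H=(a^d b)^{d+1}$ and then via a morphism to binary, whereas you defer that lemma to the prior \kLCS construction.
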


\subparagraph*{Related work.}
In 2014, Leimester and Morgenstern~\cite{kmacs} introduced a related similarity measure, \emph{the $k$-macs distance}. Let $\lcpk (X_i, Y_j) = \max\{\ell: \HD(X[i,i+\ell-1], Y[j, j+\ell-1]) \le k\}$, where $\HD$ stands for Hamming distance, i.e.\ the number of mismatches between two strings. We have $\lcsk = \max_{i,j} \lcpk(X_i, Y_j)$. The $k$-macs distance, on the other hand, is defined as a normalised average of these values.  Leimeister and Morgenstern~\cite{kmacs} showed a heuristic algorithm for computing the $k$-macs distance, with no theoretical guarantees for the precision of the approximation; other heuristic approaches for computing the $k$-macs distance include~\cite{DBLP:journals/jcb/ThankachanCLAA16,Thankachan2017}. The only algorithm with provable theoretical guarantees is~\cite{DBLP:journals/jcb/ThankachanAA16} and it computes the $k$-macs distance in $\Oh(n \log^k n)$ time and $\Oh(n)$ space.

\section{Preliminaries}\label{sec:prelim}
We assume that the alphabet of the strings $X, Y$ is $\Sigma = \{1,\ldots,\sigma\}$, where $\sigma = n^{\Oh(1)}$.

\subparagraph*{Karp--Rabin fingerprints. } The Karp--Rabin fingerprint~\cite{DBLP:journals/ibmrd/KarpR87} of a string $S = s_1 s_2 \dots s_\ell$ is defined as

$$\varphi(S) = \left(\sum_{i = 1}^{\ell} r^{i-1} s_i\right) \bmod q,$$ 
where $q = \Omega(\max\{n^5, \sigma\})$ is a prime number, and $r \in \mathbb{F}_q$ is chosen uniformly at random.  Obviously, if $S_1 = S_2$, then $\varphi(S_1) = \varphi(S_2)$. Furthermore, for any $\ell \le n$, if the fingerprints of two $\ell$-length strings $S_1, S_2$ are equal, then $S_1, S_2$ are equal with probability at least $1-1/n^4$ (for a proof, see e.g.~\cite{Porat:09}).

\subparagraph*{Dimension reduction. } We will exploit a computationally efficient variant of the Johnson--Lindenstrauss lemma~\cite{MR737400} which describes a low-distortion embedding from a high-dimensional Euclidean space into a low-dimensional one. Let $\norm{\cdot}$ be the Euclidean ($L_2$) norm of a vector. We will exploit the following claim which follows immediately from~\cite[Theorem 1.1]{ACHLIOPTAS2003671}:
 
\begin{lemma}\label{lm:dim_reduction}
Let $P$ be a set of $n$ vectors in $\mathbb{R}^{\ell}$, where $\ell \le n$. Given $\alpha = \alpha(n) > 0$ and a constant $\beta > 0$, there is $d = \Theta(\alpha^{-2}\log n)$ and a scalar $c > 0$ such that the following holds. Let $M$ be a $d \times \ell$ matrix filled with i.u.d.\ $\pm1$ random variables. For all $U \in P$, define $\sk_\alpha (U) = c \cdot M U$. Then for all $U,V \in P$ there is $\norm{U-V}^2 \leq \norm{\sk_\alpha(U)-\sk_\alpha(V)}^2 \leq (1+\alpha) \norm{U-V}^2$ with probability at least $1- n^{-\beta}$. 
\end{lemma}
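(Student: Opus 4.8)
The plan is to derive this from the random $\pm1$ form of the Johnson--Lindenstrauss lemma due to Achlioptas, cited above as \cite[Theorem~1.1]{ACHLIOPTAS2003671}, the only extra ingredient being that here the lower bound must be \emph{exact} (all the distortion is pushed onto the upper side), which I will arrange by a rescaling absorbed into the free scalar~$c$.

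First I would fix an auxiliary distortion parameter $\eta := \alpha/(2+\alpha)$; note that $\eta < 1$ and, assuming as in all intended applications that $\alpha$ is bounded above, $\eta = \Theta(\alpha)$. Applying \cite[Theorem~1.1]{ACHLIOPTAS2003671} with parameters $\eta$ and $\beta$ to the $n$ points of $P\subseteq\mathbb{R}^{\ell}$ — the theorem controls precisely the $\binom{n}{2}$ pairwise squared distances, and its required dimension is $\Theta(\eta^{-2}\log n)$ with the hidden constant an explicit increasing function of $\beta$ (coming from a Chernoff-type tail bound together with a union bound over all pairs) — yields an integer $d = \Theta(\eta^{-2}\log n) = \Theta(\alpha^{-2}\log n)$ and the guarantee that, writing $f(U) = \tfrac{1}{\sqrt d}\,MU$ for the random $\pm1$ matrix $M$, the inequalities
\[
(1-\eta)\,\norm{U-V}^2 \;\le\; \norm{f(U)-f(V)}^2 \;\le\; (1+\eta)\,\norm{U-V}^2
\]
hold simultaneously for all $U,V\in P$ with probability at least $1-n^{-\beta}$.

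Next I would set $c := \bigl(\sqrt d\,\sqrt{1-\eta}\bigr)^{-1}$, so that $\sk_\alpha(U) = c\,MU = (1-\eta)^{-1/2}f(U)$ and hence $\norm{\sk_\alpha(U)-\sk_\alpha(V)}^2 = (1-\eta)^{-1}\norm{f(U)-f(V)}^2$. On the event above, the left-hand inequality of the display then upgrades to $\norm{\sk_\alpha(U)-\sk_\alpha(V)}^2 \ge \norm{U-V}^2$, the desired exact lower bound, while the right-hand one gives $\norm{\sk_\alpha(U)-\sk_\alpha(V)}^2 \le \tfrac{1+\eta}{1-\eta}\,\norm{U-V}^2$. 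It remains to observe that $\eta = \alpha/(2+\alpha)$ is precisely the threshold at which $\tfrac{1+\eta}{1-\eta} = 1+\alpha$, so the upper bound is at most $(1+\alpha)\norm{U-V}^2$; and $d = \Theta(\alpha^{-2}\log n)$ since $\eta$ and $\alpha$ differ by at most a constant factor.

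I do not expect a genuine obstacle here: the argument is an invocation of a cited theorem followed by a one-line rescaling. The two points needing care are (i) choosing the auxiliary distortion $\eta$ as exactly the right constant multiple of $\alpha$ so that the two-sided factor $\tfrac{1+\eta}{1-\eta}$ collapses cleanly to $1+\alpha$ while the target dimension stays $\Theta(\alpha^{-2}\log n)$, and (ii) checking that the lemma's scalar $c$ may legitimately carry the extra $(1-\eta)^{-1/2}$ on top of the standard $1/\sqrt d$ normalisation used in \cite{ACHLIOPTAS2003671}.
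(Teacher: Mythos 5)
Your argument is exactly what the paper intends when it says the lemma ``follows immediately from''~\cite[Theorem~1.1]{ACHLIOPTAS2003671}: apply the two-sided Achlioptas bound with auxiliary distortion $\eta=\alpha/(2+\alpha)$ and fold the extra $(1-\eta)^{-1/2}$ rescaling into the free scalar $c$, so the lower bound becomes exact and the upper factor $\tfrac{1+\eta}{1-\eta}$ collapses to $1+\alpha$. The one caveat you flag, that $\eta=\Theta(\alpha)$ (and hence $d=\Theta(\alpha^{-2}\log n)$) needs $\alpha$ bounded above, is genuine but harmless, since every use of the lemma in the paper has $\alpha=O(1)$.
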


Since the Hamming distance between binary strings $U, V$ is equal to $\norm{U-V}^2$, the matrix~$M$ defines a low-distortion embedding from an $\ell$-dimensional into a $d$-dimensional Hamming space as well. For non-binary strings, an extra step is required. Let the alphabet be $\Sigma = \{1, 2, \ldots, \sigma\}$ and consider a morphism $\mu : \Sigma \rightarrow \{0,1\}^\sigma$, where $\mu(a) = 0^{a-1} 1 0^{\sigma-a}$ for all $a \in \Sigma$. We extend $\mu$ to strings in a natural way. Note that for two strings $U, V$ over the alphabet $\Sigma$ the Hamming distance between $\mu(U), \mu(V)$ is exactly twice the Hamming distance between $U, V$. We therefore obtain:

\begin{corollary}\label{cor:dim_reduction}
Let $P$ be a set of $n$ strings in $\Sigma^{\ell}$, where $\ell \le n$. Given $\alpha = \alpha(n) > 0$ and a constant $\beta > 0$, there is $d = \Theta(\alpha^{-2}\log n)$ and a scalar $c > 0$ such that the following holds. Let $M$ be a $d \times (\sigma \cdot \ell)$ matrix filled with i.u.d.\ $\pm1$ random variables. For all $U \in P$, define $\sk_\alpha(U) = c\cdot M \mu(U)$. Then for all $U, V \in P$ there is $\HD(U,V) \leq \norm{\sk_\alpha(U) - \sk_\alpha(V)}^2 \leq (1+\alpha) \HD(U, V)$ with probability at least $1- n^{-\beta}$.
\end{corollary}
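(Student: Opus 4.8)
The plan is to reduce the corollary to Lemma~\ref{lm:dim_reduction} by passing through the binary encoding $\mu$ and then rescaling. Set $P' = \{\mu(U) : U \in P\}$, a set of at most $n$ binary vectors living in $\{0,1\}^{\sigma\ell} \subseteq \mathbb{R}^{\sigma\ell}$. I would invoke the dimension-reduction statement for $P'$ with the same parameters $\alpha$ and $\beta$: this yields $d = \Theta(\alpha^{-2}\log n)$, a scalar $c_0 > 0$, and a random $d \times (\sigma\ell)$ matrix $M$ of i.u.d.\ $\pm 1$ entries such that, with probability at least $1 - n^{-\beta}$, the map $W \mapsto c_0 MW$ distorts squared $L_2$ distances on $P'$ by at most a factor $(1+\alpha)$ and never contracts them.

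Next I would translate distances back. Since $\mu(U)$ and $\mu(V)$ are $0/1$ vectors, $\norm{\mu(U)-\mu(V)}^2 = \HD(\mu(U),\mu(V))$, and by the property of $\mu$ recorded above this equals $2\,\HD(U,V)$. Hence on the favourable event, for all $U,V \in P$,
\[
2\,\HD(U,V) \;\le\; c_0^2\,\norm{M(\mu(U)-\mu(V))}^2 \;\le\; 2(1+\alpha)\,\HD(U,V).
\]
Dividing through by $2$ and setting $c := c_0/\sqrt{2}$ and $\sk_\alpha(U) := c\,M\mu(U)$ gives exactly $\HD(U,V) \le \norm{\sk_\alpha(U)-\sk_\alpha(V)}^2 \le (1+\alpha)\,\HD(U,V)$, as required; the target dimension is unchanged, $d = \Theta(\alpha^{-2}\log n)$.

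The one point that needs care — the main obstacle — is that $P'$ lives in dimension $\sigma\ell$, which in general exceeds $n$, so Lemma~\ref{lm:dim_reduction} cannot be quoted verbatim (it is stated for source dimension at most $n$). However, that restriction is inessential: the target dimension in the Johnson--Lindenstrauss-type bound of Achlioptas~\cite{ACHLIOPTAS2003671}, from which Lemma~\ref{lm:dim_reduction} is derived, depends only on the number of points $|P'| \le n$ (and on $\alpha$ and $\beta$) and imposes no constraint on the ambient dimension. I would therefore re-instantiate that bound directly with source dimension $\sigma\ell$ and argue as above. (One could instead first embed $P'$ isometrically into $\mathbb{R}^m$ with $m \le n$, but composing with such an isometry would destroy the random $\pm1$ structure of $M$, so the direct route is cleaner.)
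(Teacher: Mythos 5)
Your proposal is correct and matches the paper's (tersely sketched) argument: encode via $\mu$, observe that $\HD(\mu(U),\mu(V)) = \norm{\mu(U)-\mu(V)}^2 = 2\HD(U,V)$, apply Lemma~\ref{lm:dim_reduction} to the encoded set, and absorb the factor $2$ into the scalar $c$. Your observation that the hypothesis $\ell \le n$ in Lemma~\ref{lm:dim_reduction} must be waived when passing to ambient dimension $\sigma\ell$ is a genuine point of care that the paper glosses over, and your justification (the Achlioptas bound depends only on the number of points, not the source dimension) is the right one.
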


We will use the corollary for dimension reduction, and also to design a simple test that checks whether the Hamming distance between two strings is at most $k$. 

\begin{corollary}
Let $P$ be a set of $n$ strings in $\Sigma^{\ell}$, where $\ell \le n$. With probability at least $1- n^{-\beta}$, for all $U, V \in P$:
\begin{enumerate}[1)]
\item if $\norm{\sk_\alpha(U)-\sk_\alpha(V)}^2 \le (1+\alpha) k$, then $\HD(U, V) \le (1+\alpha) \cdot k$;
\item if $\norm{\sk_\alpha(U)-\sk_\alpha(V)}^2 > (1+\alpha) k$, then $\HD(U, V) \ge k$.
\end{enumerate}
\end{corollary}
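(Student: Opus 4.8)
The plan is to derive this corollary directly from Corollary~\ref{cor:dim_reduction} by a trivial case analysis, so essentially no new randomness or machinery is needed. First I would invoke Corollary~\ref{cor:dim_reduction} with the same parameters $\alpha$, $\beta$, and $P$: with probability at least $1 - n^{-\beta}$ we have, simultaneously for all $U, V \in P$, the sandwich inequality
$$\HD(U,V) \le \norm{\sk_\alpha(U) - \sk_\alpha(V)}^2 \le (1+\alpha)\,\HD(U,V).$$
We condition on this event for the remainder of the argument; all that remains is to show that the two stated implications hold pointwise whenever the sandwich holds.

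For part~1, suppose $\norm{\sk_\alpha(U) - \sk_\alpha(V)}^2 \le (1+\alpha)k$. By the left inequality of the sandwich, $\HD(U,V) \le \norm{\sk_\alpha(U) - \sk_\alpha(V)}^2 \le (1+\alpha)k$, which is exactly the desired conclusion. For part~2, suppose $\norm{\sk_\alpha(U) - \sk_\alpha(V)}^2 > (1+\alpha)k$. By the right inequality of the sandwich, $(1+\alpha)\,\HD(U,V) \ge \norm{\sk_\alpha(U) - \sk_\alpha(V)}^2 > (1+\alpha)k$, and dividing by $1+\alpha > 0$ gives $\HD(U,V) > k$, hence $\HD(U,V) \ge k$ since Hamming distance is integer-valued. (One could even state the slightly stronger $\HD(U,V) > k$, but $\HD(U,V) \ge k$ suffices and matches how the test is used later.)

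There is no real obstacle here: the only subtlety worth a sentence is that the success probability is not lost by considering both implications, because both are consequences of the \emph{same} high-probability event from Corollary~\ref{cor:dim_reduction} — we do not take a union bound over the two parts. The sketch test thus works by computing the low-dimensional sketches $\sk_\alpha(U)$ once (which also serves the dimension-reduction purpose), and then a single squared-norm comparison against the threshold $(1+\alpha)k$ decides membership up to the stated $(1+\alpha)$-approximation in the Hamming metric.
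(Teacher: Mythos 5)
Your proof is correct and matches the intended derivation: the paper states this corollary without proof precisely because it follows by the trivial case analysis you give, conditioning on the single high-probability sandwich event from Corollary~\ref{cor:dim_reduction} and reading off each implication from the appropriate side. (One small note: in part~2, the step from $\HD(U,V) > k$ to $\HD(U,V) \ge k$ holds for any real-valued quantity, so the appeal to integer-valuedness is harmless but unnecessary.)
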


\subsection{The Twenty Questions game}
\label{sec:20questions}
Consider the following version of the classic game ``Twenty Questions''. There are two players: Paul and Carole; Carole thinks of two numbers $A, B$ between $0$ and $N$, and Paul must return some number in $[A,B]$. He is allowed to ask questions of form ``Is $x \le A$?'', for any $x \in [0,N]$. If $x \le A$, Carole must return YES; If $A < x \le B$, she can return anything; and if $B < x$, she must return NO. Paul must return the answer after having asked at most $Q$ questions where Carole can tell at most $\ceil{ \rho Q }$ lies, and only in the case when $x \le A$.  

We show that Paul has a winning strategy for $Q = \Theta (\log n)$ and any $\rho < 1/3$ by a black-box reduction to the result of Dhagat, G{\'a}cs, and Winkler~\cite{Dhagat:1992:PLQ:139404.139409} who showed a winning strategy for $A = B$.

\begin{theorem}[\cite{Dhagat:1992:PLQ:139404.139409}]
For $A = B$, Paul has a winning strategy for all $\rho < \frac{1}{3}$ asking $Q = \ceil{ \frac{8 \log N}{(1-3\rho)^2} }$ questions.
\end{theorem}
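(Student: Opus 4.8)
The plan is to recast the game as an abstract potential game and to exhibit, for Paul, a strategy whose progress is tracked by a single weight function. The starting observation is that Carole's lies are one-sided: a YES answer to ``is $x\le A$?'' is never a lie, so it is truthful and permanently eliminates every candidate value $a<x$; a NO answer may be a lie, but only with respect to candidates $a\ge x$ (for which the truthful answer would be YES), so it is natural to record one unit of ``blame'' $s_a$ against each such $a$ and to declare $a$ dead once it is eliminated by a YES or once $s_a$ exceeds the total lie budget $e:=\ceil{\rho Q}$. Two facts are then immediate and drive everything: (i) the true value $A$ is never killed, and $s_A\le e$, since every NO blamed to $A$ is one of Carole's $\le e$ lies; and (ii) because the set of queries with threshold below $a$ only grows with $a$, the map $a\mapsto s_a$ is non-decreasing, so the live candidates always form a contiguous interval $[L,R]$, and Paul wins exactly when $L=R$, in which case he outputs the unique survivor.

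Next I would attach to each live candidate a ``time-aware'' weight $w_a=\sum_{i=0}^{e-s_a}\binom{t}{i}$, where $t$ is the number of questions still available, and let $\Phi=\sum_{a\in[L,R]}w_a$. This choice is engineered so that ``spending one question'' matches the identity $\sum_{i\le e-s}\binom{t}{i}=\sum_{i\le e-s}\binom{t-1}{i}+\sum_{i\le e-s-1}\binom{t-1}{i}$, whose two terms are exactly ``this answer is truthful'' and ``this answer is a lie'', and it has the right endpoints: at the start $\Phi\le (N+1)\sum_{i\le e}\binom{Q}{i}$, while at the end ($t=0$) every live candidate has weight $1$, so $\Phi<2$ certifies a unique survivor. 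Paul's move is to query the value $x$ that splits $[L,R]$ into a left block and a right block of as-equal-as-possible $\Phi$-weight: a YES then deletes the left block, while a NO keeps the left block and bumps the blame (hence downgrades the weight) of every right-block candidate. The aim is to show, via the binomial identity, that for a balanced split \emph{both} resulting potentials are at most about $\tfrac12\Phi$, so after $Q$ moves $\Phi\lesssim 2^{-Q}(N+1)\sum_{i\le e}\binom{Q}{i}$, and the requirement that this be $<2$ becomes an inequality of the form $(N+1)\sum_{i\le \ceil{\rho Q}}\binom{Q}{i}\lesssim 2^{Q}$. Estimating the binomial tail by a Chernoff/Hoeffding bound, $\sum_{i\le \rho Q}\binom{Q}{i}\le 2^{Q}\exp(-c(\rho)\,Q)$ with $c(\rho)>0$ below the relevant threshold, and solving for $Q$, yields a bound of exactly the stated shape $\ceil{8\log N/(1-3\rho)^{2}}$; the threshold on $\rho$ is where $c(\rho)$ degenerates, and the explicit constant $8$ comes out of a deliberately coarse (but clean) form of the tail estimate combined with the loss incurred by ``balanced'' rather than ``exact'' splits.

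The step I expect to be the real work is exactly the balanced-split lemma: because $w_a$ is lumpy, once $\Phi$ is no larger than the heaviest single candidate there may be no split for which both the YES-branch and the NO-branch weight drop by a factor two, and the NO-branch in particular can in the worst case only be pushed a little below $\Phi$ rather than below $\Phi/2$. I would cope with this by exploiting the monotonicity of $s_a$ (and hence of $w_a$) along $[L,R]$ to show that the relevant binomial sums on the two blocks are always comparable, so the split is good until only $O(1)$ candidates remain, and then finishing with a short endgame in which Paul repeatedly queries $x=R$ until $s_R$ overflows and $R$ decreases --- a phase costing only $O(e)=O(\rho Q)$ further questions, which is absorbed into the final count. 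Carrying the constants through this last piece and through the tail estimate carefully enough to land on precisely $8/(1-3\rho)^2$ (rather than some other explicit function of $\rho$) is where essentially all the bookkeeping lives; the conceptual skeleton is the potential argument above.
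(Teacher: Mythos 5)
First, a point of reference: the paper does not prove this statement at all --- it is imported as a black box from Dhagat, G\'acs, and Winkler, and the surrounding text only sketches their strategy (a stack of trusted intervals: bisect the top interval, push on consistent answers, pop on inconsistency, read off the surviving length-one interval). Your Berlekamp-style weight-function argument is therefore a genuinely different route, and for the one-sided-lie comparison game considered here the setup you describe (blame counters $s_a$, monotonicity of $a\mapsto s_a$, live candidates forming an interval containing $A$, the Pascal-identity conservation of $\Phi$) is sound as far as it goes.

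The gap is that the step carrying the theorem's quantitative content is exactly the step you defer. Your derivation of the bound runs: perfect halving of $\Phi$ over $Q$ questions forces $(N+1)\sum_{i\le\ceil{\rho Q}}\binom{Q}{i}\lesssim 2^{Q}$, a Chernoff/entropy estimate on the binomial tail then ``yields a bound of exactly the stated shape,'' with the threshold on $\rho$ located ``where $c(\rho)$ degenerates.'' But the exponent in $\sum_{i\le\rho Q}\binom{Q}{i}\le 2^{H_2(\rho)Q}$ (with $H_2$ the binary entropy) degenerates at $\rho=\tfrac12$, not $\tfrac13$, and no coarsening of that tail estimate produces a denominator of the form $(1-3\rho)^2$; the pure volume argument, taken at face value, would claim Paul copes with any $\rho<\tfrac12$ in roughly $\log N/(1-H_2(\rho))$ questions. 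In the classical two-sided-lie game with a total budget of $\rho Q$ lies the achievable threshold is known to drop from the volume bound's $\tfrac12$ to $\tfrac13$ precisely because of the endgame you set aside: once few candidates remain, no query splits $\Phi$ evenly and Carole can stall. So the balanced-split lemma plus endgame is not ``bookkeeping'' --- it is where both the admissible range $\rho<\tfrac13$ and the constant $8/(1-3\rho)^2$ have to come from, and your sketch offers for them only a justification (the tail bound) that points at the wrong threshold. I do think the approach is completable in the one-sided model, where your endgame of hammering $x=R$ until $s_R$ overflows costs only about $e$ questions per surviving candidate and the stated bound leaves generous slack over the volume bound throughout $(0,\tfrac13)$; but as written the proposal asserts rather than derives the two quantities the theorem is actually about.
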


This result is obtained by maintaining a stack of trusted intervals. Once Paul knows that $A$ is between $\ell$ and $r$, where $\ell \leq r$, he checks whether $A$ is in the left or the right half of the interval $[\ell,r]$. If no inconsistencies appear (like $A < \ell$ or $r < A$), he pushes the new interval to the stack, else he removes the interval $[\ell,r]$ from the stack of trusted intervals. After $Q$ rounds, Paul returns the only number in the top interval in the stack, which is guaranteed to have length $1$ and to contain $A$. We give the pseudocode of Paul's strategy in Algorithm~\ref{alg:20Q}. By $\Carole(x)$, we denote the answer of Carole for a question ``Is $x \le A$?''.

\begin{algorithm}
\caption{The \twentyquestions game}
\begin{algorithmic}[1]
\State $Q \gets  \ceil{ \frac{8 \log N}{(1-3\rho)^2} }$
\State $S \gets \{[0,N]\}$
\For {$i = 1, 2, \ldots , Q/2$ }
	\State $I = [\ell,r] \gets S.\ttop()$
	\State $\mmid \gets \ceil{ \frac{\ell + r}{2} }$
	\If {$\Carole(mid)$}
	    \If {$\Carole(r)$}
	         $S.\pop()$ \Comment{The answer is inconsistent with $I$; remove $I$.}
	    \Else~$S.\push([\mmid, r])$
        \EndIf
    \Else
        \If {$\Carole(\ell)$}
	         $S.\push([\ell,\mmid-1])$
	    \Else~$S.\pop()$ \Comment{The answer is inconsistent with $I$; remove $I$.}
        \EndIf
	\EndIf
\EndFor
\end{algorithmic}
\label{alg:20Q}
\end{algorithm}

We now a show a winning strategy for our variant of the game. 

\begin{corollary}
\label{cor:twentyquestions}
For $A \le B$, Paul has a winning strategy for all $\rho < \frac{1}{3}$ asking $Q = { \frac{8 \log N}{(1-3\rho)^2} }$ questions.
\end{corollary}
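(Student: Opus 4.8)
The plan is to have Paul play the strategy of Algorithm~\ref{alg:20Q} verbatim (running the loop for $Q/2$ rounds, so that exactly $Q$ questions are asked), reading each of its questions literally as ``Is $x\le A$?'', and then return the unique number in the top interval of the stack $S$. Since each round asks two questions, the bound $Q=\frac{8\log N}{(1-3\rho)^2}$ on the number of questions is immediate, so the whole task reduces to showing that the returned number lies in $[A,B]$.

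The first step is to record the one-sided guarantees that the rules of our variant give for free. Whenever Carole answers YES to a question ``Is $x\le A$?'' we must have $x\le B$, since for $x>B$ she is obliged to answer NO; moreover the only answers Carole is \emph{charged} for are NO-answers on questions with $x\le A$, of which there are at most $\ceil{\rho Q}$. From the first observation I would derive that the returned number is at most $B$: the left endpoint of the top interval starts at $0$, is never increased by $\push([\ell,\mmid-1])$ or by $\pop()$, and is reset by $\push([\mmid,r])$ to the value $\mmid$ of a question that received a YES, hence to a value $\le B$. Thus the left endpoint of the top interval is always $\le B$, and in particular so is the returned number, which equals the left endpoint of the final length-$1$ interval.

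For the matching bound ``$\ge A$'' I would start from the observation that, by the rules, every one of Carole's answers is consistent with ``target $=A$'' or with ``target $=B$'': a YES on a question $x\in(A,B]$ is truthful for the predicate ``$x\le B$'', a NO there is truthful for ``$x>A$'', and all other answers apart from the charged lies are consistent with both, while the at most $\ceil{\rho Q}$ charged lies are inconsistent with both. If Carole happened to stay consistent with a single one of the two numbers throughout, the theorem of Dhagat, G{\'a}cs and Winkler would already say that Algorithm~\ref{alg:20Q} returns that number, which lies in $[A,B]$. To handle arbitrary interleavings I would reuse the stack invariant behind that theorem in the robust form ``at every round the top interval contains every value of $[0,N]$ that is consistent with all of Carole's answers so far up to the number of lies still allowed'': this invariant is proved exactly as in the $A=B$ case (a $\push$ narrows the interval to the half still consistent with the last non-lie answer; a $\pop$ undoes the effect of a lie) and makes no reference to a unique secret number. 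Combined with the $\le B$ bound, verifying that some $t\in[A,B]$ remains admissible at the final round would force the length-$1$ top interval to equal such a $t$, so Paul returns an element of $[A,B]$ and wins.

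I expect the last point to be the main obstacle. The Dhagat--G{\'a}cs--Winkler guarantee is usually phrased for a single fixed secret number, whereas a middle-zone answer (on $x\in(A,B]$) behaves like a cost-free move that is consistent with a \emph{moving} target, so it is not immediately legitimate to invoke their theorem as a true black box; one has to check that the robust invariant genuinely survives the extra freedom Carole enjoys on $(A,B]$, and in particular that she cannot use those queries to drag the top interval entirely off $[A,B]$ without spending more than $\ceil{\rho Q}$ charged lies. The structural fact I would lean on to push this through is that in our variant a ``go right'' step (a YES answer) can never be a lie, so Carole cannot overspend on such steps, while any step that brings the top interval below $A$ must be a genuine, charged lie; bounding how often Algorithm~\ref{alg:20Q} can even query $(A,B]$ and quantifying this trade-off is the part of the argument I would expect to require the most care.
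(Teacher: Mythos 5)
Your approach diverges from the paper's in the key step: you propose to run Algorithm~\ref{alg:20Q} unchanged and return the single number in the top interval, whereas the paper \emph{modifies the return value}: Paul returns the argument of the largest YES he ever received. That small change is what makes the paper's proof a genuine black-box reduction. With the paper's modification, the argument is a clean two-case split: if Carole ever says YES to some $x\in(A,B]$, the returned argument is automatically $>A$, and since a YES forces $x\le B$ it is also $\le B$; if Carole never says YES on $(A,B]$, then her transcript is literally one that could have been produced with $A=B$ (YES for $x\le A$, NO for $x>A$, up to $\ceil{\rho Q}$ charged lies), so the Dhagat--G\'acs--Winkler theorem applies \emph{verbatim} with secret $A$, and the run that terminates at $[A,A]$ must contain a YES on $x=A$, making the largest YES argument exactly $A$.

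Your $\le B$ half is fine and essentially matches what the paper needs: the left endpoint of the top interval is always $0$ or the argument of some YES, hence $\le B$. But your $\ge A$ half has a real gap, which you yourself flag. To invoke DGW for the unchanged algorithm you would need a ``robust invariant'' that the top interval tracks \emph{some} admissible number in $[A,B]$ even when Carole freely mixes YES and NO on $(A,B]$. This is not a consequence of the DGW theorem as stated; with respect to a fixed target $t=A$, every middle-zone YES is an uncharged, unbounded ``lie,'' and DGW's budget accounting does not cover it. Whether the original algorithm's top interval still lands inside $[A,B]$ is exactly the claim you would have to reprove from the inside of the DGW stack analysis, not cite. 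The paper's device of returning the largest YES argument is precisely what lets one avoid reopening that analysis: it converts the problematic case (a middle-zone YES ever occurs) into the trivial one, and reduces the remaining case to an honest instance of the $A=B$ game. You should adopt that modification rather than try to patch the robust-invariant argument.
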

\begin{proof}
We introduce just one change to Algorithm~\ref{alg:20Q}, namely, we return the argument of the largest YES obtained in the course of the algorithm. From the problem statement it follows that the answer is at most $B$. We shall now prove that the answer is at least $A$. If Carole ever returned YES for $A < x \le B$, then it is obviously the case. Otherwise, Carole actually behaved as if she had $A=B$ in mind: apart from the small fraction of erroneous answers, she returned YES for $x \le A$, and NO for $x > A$. Thus, the strategy of Dhagat, G{\'a}cs, and Winkler ends up with $A$ as the answer (and this must be due to a YES for $x = A$).
\end{proof}

\section{\texorpdfstring{\kApproxLCS}{LCS with approximately k mismatches}}\label{sec:klcs}
In this section, we prove Theorem~\ref{th:klcs_upper}. Let us first introduce a decision variant of the \kApproxLCS problem. 

\begin{problem}\label{pr:LCS'k-decision}
Two strings $X, Y$ of length at most $n$, integers $k, \ell$, and a constant $\eps > 0$ are given. We must return:
\begin{enumerate}
\item YES if $\ell \le \lcsk(X,Y)$;
\item Anything if $\lcsk(X,Y) < \ell \le \mathrm{LCS}_{(1+\eps)k}(X,Y)$; 
\item NO if $\mathrm{LCS}_{(1+\eps)k}(X,Y) < \ell$.
\end{enumerate}
If we return YES, we must also give a \emph{witness pair} of length-$\ell$ substrings $S_1$ and $S_2$ of $X$ and $Y$, respectively, such that $\HD(S_1,S_2)\le (1+\eps)k$. 
\end{problem}

The decision variant of the \kApproxLCS problem can be reduced to the following $(c,r)$-\NN problem. 

\begin{problem}\label{pr:NN}
In the $(c,r)$-\NN problem with failure probability~$f$, the aim is, given a set $P$ of $n$ points in $\mathbb{R}^d$, to construct a data structure supporting the following queries: given any point $q\in \mathbb{R}^d$, if there exists $p \in P$ such that $\norm{p-q} \leq r$, then return some point $p' \in P$ such that $\norm{p'-q} \leq cr$ with probability at least $1-f$.
\end{problem}

Using the reduction, we will show our first solution to the \kApproxLCS decision problem based on the result of Andoni and Razenshteyn~\cite{DBLP:conf/stoc/AndoniR15}, who showed that for any constant $f$, there is a data structure for the $(c,r)$-\NN problem that has $\Oh(n^{1+\rho+o(1)}+d \cdot n)$ size, $\Oh(d\cdot n^{\rho+o(1)})$ query time, and $\Oh(d \cdot n^{1+\rho+o(1)})$ preprocessing time, where $\rho = 1/(2c^2-1)$.

\begin{lemma}\label{lm:opt_NN}
Assume an alphabet of constant size $\sigma$. The decision variant of the \kApproxLCS problem can be solved in space $\Oh(n^{1+ 1/(1+2\eps) + o(1)})$ and time $\Oh(n^{1+ 1/(1+2\eps) + o(1)})$. The answer is correct with constant probability. 
\end{lemma}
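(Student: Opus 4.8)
The plan is to reduce the decision problem (Problem~\ref{pr:LCS'k-decision}) to a single instance of $(c,r)$-\NN and invoke the data structure of Andoni and Razenshteyn~\cite{DBLP:conf/stoc/AndoniR15}. We may assume $\ell\le\min(|X|,|Y|)$, as otherwise the answer is NO. Fix a parameter $\alpha=\alpha(n)$ going to $0$ slowly, say $\alpha=1/\log\log n$, so that $d:=\Theta(\alpha^{-2}\log n)=n^{o(1)}$. Apply Corollary~\ref{cor:dim_reduction} with this $\alpha$ and a large constant $\beta$ to the set of all $\le 2n$ length-$\ell$ substrings of $X$ and of $Y$; then with probability $\ge 1-n^{-\beta}$ the resulting sketches $\sk_\alpha(\cdot)\in\mathbb{R}^d$ satisfy $\HD(U,V)\le\norm{\sk_\alpha(U)-\sk_\alpha(V)}^2\le(1+\alpha)\HD(U,V)$ for every pair $U,V$ of these substrings, and we condition on this event.

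Let $P=\{\sk_\alpha(X[i..i+\ell-1])\}$, set $r=\sqrt{(1+\alpha)k}$ and $cr=\sqrt{(1+\eps)k}$, so that $c^2=(1+\eps)/(1+\alpha)$ stays in a fixed compact subinterval of $(1,\infty)$ for all large $n$, and build the $(c,r)$-\NN structure on $P$ with a constant failure probability $f$. For each starting position $j$ in $Y$ we query with $q=\sk_\alpha(Y[j..j+\ell-1])$; if the structure returns $p'=\sk_\alpha(X[i'..i'+\ell-1])$ with $\norm{p'-q}\le cr$, we output YES with witness $(X[i'..i'+\ell-1],Y[j..j+\ell-1])$, which is valid because then $\HD(X[i'..i'+\ell-1],Y[j..j+\ell-1])\le\norm{p'-q}^2\le(1+\eps)k$ by the lower bound of Corollary~\ref{cor:dim_reduction}; if no query returns such a point we output NO. Correctness splits into two cases. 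If $\ell\le\lcsk(X,Y)$, then some pair of length-$\ell$ substrings has Hamming distance $\le k$, hence squared sketch distance $\le(1+\alpha)k=r^2$, so for the corresponding position in $Y$ there is a point of $P$ within $r$ of $q$, and with probability $\ge1-f$ that one query returns a point within $cr$; since a single successful query suffices, we output YES with a valid witness with probability $\ge1-f$. If $\lcske(X,Y)<\ell$, then every pair of length-$\ell$ substrings has Hamming distance $>(1+\eps)k$, hence squared sketch distance $>(cr)^2$, so no query can return a point within $cr$ and we output NO; this is deterministic given the dimension-reduction event. The intermediate case is unconstrained, so for suitable $f$ and $\beta$ the failure probability $\le f+n^{-\beta}$ is a constant below $1$.

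For the complexity, $\rho=1/(2c^2-1)=(1+\alpha)/(1+2\eps-\alpha)=1/(1+2\eps)+\Oh(\alpha)$ for fixed $\eps$, so $n^\rho=n^{1/(1+2\eps)}\cdot2^{\Oh(\alpha\log n)}=n^{1/(1+2\eps)+o(1)}$ by our choice of $\alpha$. Hence the structure has size $\Oh(n^{1+\rho+o(1)}+dn)=\Oh(n^{1+1/(1+2\eps)+o(1)})$, is built in $\Oh(dn^{1+\rho+o(1)})=\Oh(n^{1+1/(1+2\eps)+o(1)})$ time, and answers each of the $\le n$ queries in $\Oh(dn^{\rho+o(1)})=n^{1/(1+2\eps)+o(1)}$ time, for a query total of $\Oh(n^{1+1/(1+2\eps)+o(1)})$. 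It remains to produce the sketches within this budget: the naive $\Oh(nd\ell)$ bound is too large when $\ell$ is large, so instead note that coordinate $j$ of $\sk_\alpha(X[i..i+\ell-1])$ equals $c\sum_{p=1}^{\ell}M_{j,(p-1)\sigma+X[i+p-1]}$, which split by character is a sum of $\sigma$ cross-correlations of a $\{0,1\}$ indicator sequence of $X$ with a $\pm1$ sequence of length $\ell$; thus all $d$ coordinates of all windows of $X$ and of $Y$ are obtained by $\Oh(\sigma d)$ FFT-based convolutions in $\Oh(\sigma d\,n\log n)=n^{1+o(1)}$ total time. Adding the three contributions yields the claimed $\Oh(n^{1+1/(1+2\eps)+o(1)})$ time and space.

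The main points needing care are: (i) the black-box $(c,r)$-\NN gives no guarantee when no point of $P$ lies within $r$ of the query, so the NO side cannot rely on it at all — this forces the two radii to straddle $k$ and $(1+\eps)k$ and forces re-verification of the returned point, making NO a deterministic decision given the dimension-reduction event; (ii) the exponent $1/(1+2\eps)$ is attained only as $\alpha\to0$, so one must check that both the $\Oh(\alpha)$ slack in $\rho$ and the polylogarithmic growth of $d$ are swallowed by $n^{o(1)}$, which pins down how slowly $\alpha$ may vanish; and (iii) the one genuinely non-black-box ingredient is the near-linear FFT computation of all sketches, without which the sketching step would dominate for large $\ell$. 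Of these, (iii) is the only place where real work beyond bookkeeping is required, so I expect it to be the main obstacle, with (i) the main conceptual subtlety.
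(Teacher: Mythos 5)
Your proposal is correct and follows essentially the same route as the paper: dimension-reduce all length-$\ell$ windows via Corollary~\ref{cor:dim_reduction} with $\alpha\to 0$ slowly, build the Andoni--Razenshteyn $(c,r)$-\NN structure on the sketches of one string's windows, query with the other's, verify the returned point against the threshold $(1+\eps)k$, and let $\alpha\to 0$ so that $\rho\to 1/(1+2\eps)$. Your parameterization $c^2=(1+\eps)/(1+\alpha)$ differs cosmetically from the paper's $c^2=(1+\eps)(1-\alpha)$ (the paper also swaps the roles of $X$ and $Y$, and phrases the near-linear sketch computation in terms of FFTs over the $\mu$-encoded strings rather than per-character indicator convolutions), but these are equivalent in spirit and yield the same bounds.
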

\begin{proof}
Let $P$ be the set of all length-$\ell$ substrings of $X$ and $Q$ be the set of all length-$\ell$ substrings of $Y$,
all encoded in binary using the morphism $\mu$ (see Section~\ref{sec:prelim}). We start by applying the dimension reduction procedure of Corollary~\ref{cor:dim_reduction} to $P$ and $Q$ with $\alpha = 1/(\log\log n)^{\Theta(1)}$ and $\beta = 2$ to obtain sets $P'$ and $Q'$. We can implement the procedure in $\Oh(\sigma n \log^2 n (\log\log n)^{\Theta(1)}) = \Oh(n \log^{2+o(1)} n)$ time by encoding $X, Y$ using $\mu$ and running the FFT algorithm~\cite{FischerPaterson} for each of the $\Oh(\log^{1+o(1)} n)$ rows of the matrix and $\mu(X), \mu(Y)$. 

To solve the decision variant of \kApproxLCS, we build the data structure of Andoni and Razenshteyn~\cite{DBLP:conf/stoc/AndoniR15} for $(\sqrt{(1+\eps)(1-\alpha)}, \sqrt{(1+\alpha)k})$-\NN over~$Q'$. We make a query for each string in $P'$. If, queried for $\sk_{\alpha}(S_1)\in P'$, where $S_1$ is a length-$\ell$ substring of $X$, the data structure outputs $\sk_{\alpha}(S_2)\in Q'$, where $S_2$ is a length-$\ell$ substring of $Y$, then we compute $\norm{\sk_{\alpha}(S_1)- \sk_{\alpha}(S_2)}^2$. If it is at most $(1+\eps)k$, we output YES and the witness pair $(S_1,S_2)$ of substrings. As the length of vectors in $P'$, $Q'$ is $d = \Oh(\log^{1+o(1)} n)$, we obtain the desired complexity. 

To show that the algorithm is correct, suppose that there are length-$\ell$ substrings $S_1$ and $S_2$ of $X$ and $Y$, respectively, with $\HD(S_1, S_2) \le k$. By Corollary~\ref{cor:dim_reduction}, $\norm{\sk_\alpha(S_1),\sk_\alpha(S_2)}\le \sqrt{(1+\alpha)k}$ holds with probability at least $1-1/n$. Then, when querying for $\sk_\alpha(S_1)$, with constant probability the data structure will output a string $\sk_\alpha(S'_2)$ such that $\norm{\sk_\alpha(S_1) - \sk_\alpha(S'_2)}^2 \le (1+\eps)(1-\alpha^2) k \le (1+\eps) k$. Then, our algorithm will return YES. 

On the other hand, if we output YES with a witness pair $(S_1,S_2)$, then $\norm{\sk_\alpha(S_1) - \sk_\alpha(S_2)}^2\le (1+\eps)k$ implies $\HD(S_1,S_2)\le (1+\eps)k$ with high probability by Corollary~\ref{cor:dim_reduction}.
\end{proof}

While this solution is very fast, it uses quite a lot of space. Furthermore, the data structure of~\cite{DBLP:conf/stoc/AndoniR15} that we use as a black box applies highly non-trivial techniques. To overcome these two disadvantages, we will show a different solution based on a careful implementation of ideas first introduced in~\cite{substringNN} that showed a data structure for approximate text indexing with mismatches. In~\cite{DBLP:journals/algorithmica/KociumakaRS19}, the authors developed these ideas further to show an algorithm that solves the \kApproxLCS problem in $\Oh(n^{1+1/(1+\eps)} )$ space and $\Oh (n^{1+1/(1+\eps)} \log^2 n)$ time for $\eps\in(0,2)$ with constant error probability. In this work, we significantly improve and simplify the approach to show the following result:  

\begin{theorem}\label{th:LCS'k-decision}
Assume an alphabet of arbitrary size $\sigma = n^{\Oh(1)}$. The decision variant of \kApproxLCS can be solved in $\Oh(n^{1+1/(1+\eps)} \log^2 n)$ time and $\Oh(n)$ space. The answer is correct with constant probability. 
\end{theorem}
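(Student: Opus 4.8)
The plan is to solve the decision problem via locality-sensitive hashing (LSH) on length-$\ell$ substrings, but without the heavyweight black-box machinery of Andoni--Razenshteyn: instead I would use the elementary bit-sampling LSH for Hamming space, combined with Karp--Rabin fingerprints to test the sampled coordinates quickly and to keep the space linear. Concretely, for a string $S$ of length $\ell$, a hash is defined by sampling a set $R$ of $m$ positions out of $\ell$ uniformly and reading $S$ restricted to those positions; two strings at Hamming distance $\le k$ agree on all of $R$ with probability roughly $(1-k/\ell)^m$, while two strings at distance $> (1+\eps)k$ agree with probability at most $(1-(1+\eps)k/\ell)^m$. Choosing $m \approx (\ell/k)\ln n$ makes the "far" collision probability $n^{-(1+\eps)}$ and the "near" collision probability $n^{-1}$, which is the source of the $n^{1/(1+\eps)}$ exponent: we need about $n^{1/(1+\eps)}$ independent hash tables (and a $\Theta(\log n)$ factor to boost the per-table success probability to constant, and another $\log n$ for standard binary-search/verification overhead), giving the claimed $\Oh(n^{1+1/(1+\eps)}\log^2 n)$ running time.

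The key steps, in order, are: (1) reduce the decision problem to "for this fixed $\ell$, is there a length-$\ell$ substring of $X$ within Hamming distance $k$ of some length-$\ell$ substring of $Y$, and if so exhibit a pair within distance $(1+\eps)k$"; (2) build $L = \Theta(n^{1/(1+\eps)}\log n)$ hash functions, each defined by a random set $R$ of $m$ positions inside $[0,\ell)$, and for each hash function bucket all $\Oh(n)$ length-$\ell$ substrings of $Y$ by their projection onto $R$ — crucially, store only a Karp--Rabin fingerprint of the projection rather than the projected string, so each table is $\Oh(n)$ space and the total is $\Oh(L\cdot n)$... which is too much, so here we instead process the $L$ tables one at a time, reusing $\Oh(n)$ space, and only keep running the search; (3) for each length-$\ell$ substring $S_1$ of $X$ and each of the $L$ hash functions, compute the fingerprint of $S_1$ projected onto $R$, look it up in the current table, and for every colliding $S_2$ verify $\HD(S_1,S_2)$ directly (or via the $\sk_\alpha$ test of the Corollary) against the threshold $(1+\eps)k$; (4) analyze correctness: if $\lcsk(X,Y)\ge \ell$ then the optimal pair $(S_1^\star,S_2^\star)$ collides in at least one table with constant probability by the choice of $L$; and analyze the running time, bounding the expected number of "spurious" collisions (pairs at distance $>(1+\eps)k$ that nonetheless collide) per table by $\Oh(n\cdot n^{-(1+\eps)}\cdot n^2) = \Oh(n^{2-\eps})$... which needs care — one must argue the total work across all tables is $\Oh(n^{1+1/(1+\eps)}\log^2 n)$, not more.

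The genuine obstacle I anticipate is twofold. First, computing the projection fingerprint of a length-$\ell$ substring onto an \emph{arbitrary} $m$-subset of positions, for every one of the $\Oh(n)$ substrings and every one of the $L$ hash functions, naively costs $\Omega(m)$ per substring per table, i.e. $\Omega(n \cdot L \cdot m)$, which blows up because $m$ can be as large as $\ell\approx n$. The fix — and this is where the "careful implementation" in the theorem statement is doing real work — is to note that consecutive length-$\ell$ substrings of $X$ (and of $Y$) share all but one position, so the projected fingerprint can be maintained incrementally as the window slides, updating in $\Oh(1)$ amortized time per shift once the sampling pattern $R$ is fixed; this requires choosing the Karp--Rabin scheme and the encoding so that adding/removing a single sampled coordinate is a constant number of field operations, and it is the reason one wants fingerprints of \emph{sets of positions} rather than of contiguous strings. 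Second, one must handle the case $k=0$ or $k$ very small, and the case $\ell$ small, separately (there the trivial suffix-tree-based exact LCS algorithm applies), and one must ensure the number of sampled positions $m$ does not exceed $\ell$; I would dispatch these boundary regimes with a short case analysis at the start. The space bound $\Oh(n)$ then follows because we hold only one hash table at a time, the fingerprints are $\Oh(\log n)$-bit words, and the sliding-window fingerprint state is $\Oh(1)$ words.
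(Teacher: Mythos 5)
Your overall plan is aligned with the paper's: bit-sampling LSH in Hamming space, Karp--Rabin fingerprints of projections, reuse of space by processing hash tables one at a time, and a J--L sketch (or direct comparison) to verify candidate pairs. However, the proposal has a fatal flaw at exactly the spot you call ``the genuine obstacle,'' and it also leaves unresolved a second obstacle that the paper solves with a nontrivial device.

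\textbf{The sliding-window fingerprint update does not work.} You claim that because consecutive length-$\ell$ windows of $X$ share all but one character, the Karp--Rabin fingerprint of the \emph{projection} onto a fixed offset set $R = \{a_1 < \cdots < a_m\}$ can be maintained in $\Oh(1)$ amortized time per shift. This is false: window $i$ samples absolute positions $i+a_1,\dots,i+a_m$, while window $i+1$ samples $i+1+a_1,\dots,i+1+a_m$, and these two multisets of absolute positions are disjoint except for incidental coincidences $a_j + 1 \in R$. There is no single-position add/remove structure; all $m$ sampled coordinates move, and even when a sampled character is reused, its index within the projected string (and hence its power-of-$r$ weight in the fingerprint) changes. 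So a naive slide costs $\Theta(m)$ per window, i.e.\ $\Theta(nLm)$ total, which can be as large as $n^{2+1/(1+\eps)}\log n$. The paper instead computes all projection fingerprints for a fixed hash function at once via FFT: build a length-$\ell$ vector with $r^{i-1}$ placed at offset $a_i$ (and zeros elsewhere), and convolve it with $X$ and with $Y$ over $\mathbb{Z}_q$; this produces the fingerprint of every window's projection in $\Oh(n\log n)$ time per hash function and $\Oh(n)$ space, which is where the crucial $\log n$ factor and the linear space actually come from.

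\textbf{The spurious-collision bound needs a concrete mechanism, not just an expectation.} You note that bounding the total work across tables ``needs care,'' but the proposal does not supply the idea that makes it go through. Checking \emph{every} collision could blow up the time, since the number of collisions is only small in expectation. The paper caps the first (sketch-based) test at $\min\{4nL, |\Collisions^{\Hashes}_{\ell}|\}$ collisions, and adds a second test: draw a \emph{single} collision uniformly at random (via weighted reservoir sampling, so this can be done in one pass in $\Oh(n)$ space) and check its Hamming distance directly. Lemma~\ref{lm:bad_collisions} then shows by a Markov argument that if there are more than $4nL$ collisions, the random one is good with probability $\ge 1/2$. Without this two-test structure, you have no worst-case bound on the number of verifications performed. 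Separately, a smaller point: the $\log n$ boost you put into $L$ is unnecessary (constant success probability per decision call suffices, since the \twentyquestions layer already repeats $\Theta(\log n)$ times); and precomputing $\sk_\alpha$ sketches for all $\Oh(n)$ windows costs $\Oh(n\log n)$ space, which is why the paper also computes sketch coordinates in $\Oh(\log n)$ FFT passes, one coordinate at a time, to stay within $\Oh(n)$ space.
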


Let us defer the proof of the theorem until Section~\ref{sec:decision} and start by explaining how we use Lemma~\ref{lm:opt_NN} and Theorem~\ref{th:LCS'k-decision} and the \twentyquestions game to show Theorem~\ref{th:klcs_upper}.

\begin{proof}[Proof of Theorem~\ref{th:klcs_upper}]
We will rely on the modified version of the \twentyquestions game that we
  described in Section~\ref{sec:20questions}. In our case, $A = \lcsk(X,Y)$ and
  $B = \lcske(X,Y)$. For Carole, we use either the algorithm of
  Lemma~\ref{lm:opt_NN}, or the algorithm of Theorem~\ref{th:LCS'k-decision}, 
  with an additional procedure verifying the witness pair $(S_1,S_2)$ character by character to check that it indeed satisfies $\HD(S_1,S_2)\le (1+\eps)k$.
  We output the longest pair of (honest) witness
  substrings found across all iterations. We will return a correct answer
  assuming that the fraction of errors is $\rho <\frac13$. Recall that the algorithm solves the decision variant of the \kApproxLCS problem incorrectly with probability not exceeding a constant $\delta$, and we can ensure $\delta < \frac13$ by repeating it a constant number of times. It means that Carole can answer an individual question erroneously with probability less than~$\frac13$. Therefore, for a sufficiently large constant in the number of queries $Q = \Theta(\log n)$, the fraction of erroneous answers is $\rho < \frac13$ with high probability by Chernoff--Hoeffding bounds. The claim of the theorem follows immediately from Lemma~\ref{lm:opt_NN} and Theorem~\ref{th:LCS'k-decision}.
\end{proof}

\subsection{Proof of Theorem~\ref{th:LCS'k-decision}}\label{sec:decision}
We first give an algorithm for the decision version of the \kApproxLCS problem that uses $\Oh(n \log n)$ space and $\Oh(n^{1+1/(1+\eps)} \log n + \sigma n \log^2 n)$ time, and then we improve the space and time complexity. 

We assume to have fixed a Karp--Rabin fingerprinting function $\varphi$ for a prime $q = \Omega(\max\{n^5, \sigma\})$ and an integer $r \in \mathbb{Z}_q$. With error probability inverse polynomial in $n$, we can find such $q$ in $\Oh(\log^{\Oh(1)}n)$ time;
see~\cite{DBLP:journals/moc/TaoCH12,agrawal2004primes}. 

Let~$\Projections$ be the set of all projections of strings of length $\ell$ onto a single position, i.e., the value $\pi_i(S)$ of the $i$-th projection on a string $S$ of length $\ell$ is simply its $i$-th character $S[i]$. More generally, for a length-$\ell$ string $S$ and a function $h=(\pi_{a_1},\ldots,\pi_{a_m}) \in \Projections^m$, we define $h(S)$ as $S[a_{1}] S[a_{2}] \cdots S[a_{m}]$.

Let $p_1 = 1 - k / \ell$ and $p_2 = 1 - (1+\eps) k / \ell$. We assume that $(1+\eps) k< \ell$ in order to guarantee $p_1>p_2>0$; the problem is trivial if $(1+\eps)k \ge \ell$. 
Further, let $m = \ceil { \log_{p_2}{\tfrac{1}{n}} }$.

We choose a set $\Hashes$ of $L = \Theta(n^{1/(1+\eps)})$ hash functions in $\Projections^m$ uniformly at random. Let $\Collisions^{\Hashes}_{\ell}$ be the mutliset of all collisions of length-$\ell$ substrings of $X$ and $Y$ under the functions from $\Hashes$, i.e. $\Collisions^{\Hashes}_{\ell} = \{(X[i,i+\ell-1], Y[j,j+\ell-1],h) : \varphi(h(X[i,i+\ell-1])) = \varphi(h(Y[j,j+\ell-1])), 1 \le i \le |X|-\ell, 1 \le j \le |Y|-\ell\}$. 

We will perform two tests. The first test chooses an arbitrary subset $\Collisions'\subseteq \Collisions^{\Hashes}_{\ell}$ of size $|\Collisions'|=\min\{4nL, |\Collisions^{\Hashes}_{\ell}|\}$ and, for each collision $(S_1, S_2, h)\in \Collisions'$, computes $\norm{\sk_\eps(S_1) - \sk_\eps (S_2)}^2$. If this value is at most $(1+\eps) k$, then the algorithm returns YES and the pair $(S_1, S_2)$ as a witness. The second test chooses a collision $(S_1, S_2, h) \in \Collisions^{\Hashes}_{\ell}$ uniformly at random and computes the Hamming distance between $S_1$ and $S_2$ character by character in $\Oh(\ell) = \Oh(n)$ time. If the Hamming distance is at most $(1+\eps)k$, the algorithm returns YES and the witness pair $(S_1, S_2)$. Otherwise, the algorithm returns NO. See Algorithm~\ref{alg:LSH}.

\begin{algorithm}[ht]
\caption{\kApproxLCS (decision variant)}
\begin{algorithmic}[1]
\State Choose a set $\Hashes$ of $L$ functions from $\Projections^m$ uniformly at random
\State $\Collisions^{\Hashes}_{\ell}\!=\!\{(S_1,S_2, h) : S_1, S_2\text{---length-$\ell$ substrings of $X,Y$ resp. and }\varphi(h(S_1)) = \varphi(h(S_2))\}$\label{ln:hashes}

\State Choose an arbitrary subset $\Collisions' \subseteq \Collisions^{\Hashes}_{\ell}$ of size $\min\{4nL,|\Collisions^{\Hashes}_{\ell}|\}$
\State Compute $\sk_\eps(\cdot)$ sketches for all length-$\ell$ substrings of $X, Y$  \label{ln:sketches}
\For {$(S_1, S_2, h) \in \Collisions'$} \label{ln:test1}
	\If {$\norm{\sk_\eps (S_1) - \sk_\eps (S_2)}^2 \le (1+\eps) k$} \label{ln:Hamming_dist}\Return $(\text{YES},(S_1,S_2))$
	\EndIf
\EndFor

\State Draw a collision $(S_1, S_2, h) \in \Collisions^{\Hashes}_{\ell}$ uniformly at random \label{ln:test2}
\If {$\HD (S_1, S_2) \le (1+\eps) k$} \label{ln:test3}\Return $(\text{YES},(S_1,S_2))$
\EndIf\\
\Return NO
\end{algorithmic}
\label{alg:LSH}
\end{algorithm}

We must explain how we compute $\Collisions^{\Hashes}_{\ell}$ and choose the collisions that we test. We consider each hash function $h \in \Hashes$ in turn. Let $h = (\pi_{a_1},\ldots,\pi_{a_m})$. Recall that for a string $S$ of length~$\ell$ we define $h(S)$ as $S[a_{1}] S[a_{2}] \cdots S[a_{m}]$. Consequently, $\varphi(h(S)) =  (\sum_{i=1}^m r^{i-1} S[a_{i}]) \bmod q$. We create a vector $U$ of length $\ell$ where each entry is initialised with $0$. For each $i$, we add $r^{i-1} \bmod q$ to the $a_{i}$-th entry of $U$. Finally, we run the FFT algorithm~\cite{FischerPaterson} for $U$ and $X, Y$ in the field $\mathbb{Z}_q$, and sort the resulting values. We obtain a list of sorted values that we can use to generate the collisions. Namely, consider some fixed value $z$. Assume that there are $x$ substrings of $X$ and $y$ substrings of $Y$ of length $\ell$ such that the fingerprint of their projection is equal to $z$. The value $z$ then gives $xy$ collisions, and we can generate each one of them in constant time. This explains how to choose the subset $\Collisions'$ in $\Oh(nL \log n)$ time.

To draw a collision from $\Collisions^{\Hashes}_{\ell}$ uniformly at random, we could simply compute the total number of collisions across all functions $h \in \Hashes$, draw a number in $[1,  |\Collisions^{\Hashes}_{\ell}|]$, and generate the corresponding collision. However, this would require to generate the collisions twice. Instead, we use the weighted reservoir sampling algorithm~\cite{reservoir}. We divide all collisions into subsets according to the values of fingerprints. We assume that the weighted reservoir sampling algorithm receives the fingerprint values one-by-one, as well as the number of corresponding collisions. At all times, the algorithm maintains a ``reservoir'' containing one fingerprint value and a random collision corresponding to this value. When a new value $z$ with $xy$ collisions arrives, the algorithm replaces the value in the reservoir with $z$ and a random collision with some probability. Note that to select a random collision it suffices to choose a pair from $[1,x] \times [1,y]$ uniformly at random. It is guaranteed that if for a value $z$ we have $xy$ collisions, the algorithm will select $z$ with probability $xy/|\Collisions^{\Hashes}_{\ell}|$. Consequently, after processing all values, the reservoir will contain a collision chosen from $\Collisions^{\Hashes}_{\ell}$ uniformly at random.

\begin{lemma}\label{lm:complexity}
Algorithm~\ref{alg:LSH} uses $\Oh(n^{1+1/(1+\eps)} \log n + \sigma n \log^2 n)$ time and $\Oh(n \log n)$ space. 
\end{lemma}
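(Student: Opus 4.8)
The plan is to walk through Algorithm~\ref{alg:LSH} and bound the time and working space contributed by each part. The guiding principle is that no object of size $\omega(n\log n)$ is ever stored in full: neither the family $\Hashes$, nor the collision set $\Collisions'$, nor the Johnson--Lindenstrauss matrix $M$ of Corollary~\ref{cor:dim_reduction} is materialized; all three are processed in a streaming fashion.

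Consider first the sketches of line~\ref{ln:sketches}. Since $\eps$ is a constant, the target dimension is $d=\Theta(\eps^{-2}\log n)=\Theta(\log n)$, so the table holding $\sk_\eps$ of all $O(n)$ length-$\ell$ windows of $X$ and $Y$ occupies $O(nd)=O(n\log n)$ space. I would \emph{not} compute it by convolving $\mu(X)$ (a vector of length $\sigma n$) with each row of $M$, as that would need $\Omega(\sigma n)$ workspace; instead, for each row $j\le d$ and each character $a\in\Sigma$, convolve the length-$\ell$ vector of matrix entries $(M_{j,\,\sigma(i-1)+a})_{i\le\ell}$ with the $0/1$ indicator string $([X[p]=a])_p$, and sum these $\sigma$ convolutions over $a$ to obtain the $j$-th coordinate of every window sketch. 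Generating on demand, from a fixed random seed, the $\ell$ matrix entries each convolution needs, this amounts to $d\sigma$ convolutions of length $O(n)$, i.e.\ $O(d\sigma\,n\log n)=O(\sigma n\log^2 n)$ time via FFT~\cite{FischerPaterson}, with only $O(n)$ reusable workspace beyond the persistent sketch table; the same applies to $Y$.

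Next comes the hash-function loop, which I would run in a single pass, generating the $L=\Theta(n^{1/(1+\eps)})$ functions one at a time and discarding each after use. For a fixed $h=(\pi_{a_1},\dots,\pi_{a_m})$, stream the indices $a_i$ to build, in $O(m+\ell)$ time and $O(\ell)=O(n)$ space, the length-$\ell$ vector $U$ with $U[a]=\sum_{i:\,a_i=a}r^{i-1}\bmod q$; one length-$O(n)$ FFT over $\mathbb{Z}_q$ against $X$ and against $Y$, followed by sorting, then yields in $O(n\log n)$ time and $O(n)$ space the values $\varphi(h(\cdot))$ grouped by value, hence the counts $x_z,y_z$ of matching length-$\ell$ windows of $X$ and of $Y$ for each fingerprint value $z$. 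From these counts I would, without ever listing $\Collisions^{\Hashes}_\ell$: (i) enumerate collisions in $O(1)$ time each and evaluate the first-test predicate $\norm{\sk_\eps(S_1)-\sk_\eps(S_2)}^2\le(1+\eps)k$ at cost $O(d)=O(\log n)$ per collision using the sketch table, ceasing this test once $\min\{4nL,\,|\Collisions^{\Hashes}_\ell|\}$ collisions have been examined; and (ii) feed the $O(n)$ pairs $(z,\,x_zy_z)$ into the weighted reservoir sampler~\cite{reservoir}. Over the whole loop the first test touches $O(nL)$ collisions at $O(\log n)$ each, the reservoir sampler handles $O(nL)$ weighted items at $O(1)$ each, and building the $U$'s together with the FFTs costs $O(L(m+n\log n))$. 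When the pass ends, the second test regenerates the single sampled hash function, re-derives the chosen fingerprint value, extracts the two length-$\ell$ substrings, and compares them character by character in $O(\ell)=O(n)$ time and space.

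Summing the above, the space is $O(n\log n)$, dominated by the sketch table (everything else is $O(n)$ and reused). For the running time, the only non-obvious term is $O(Lm)$: since $m=\ceil{\log_{p_2}\tfrac{1}{n}}=O\!\big(\tfrac{\ell}{(1+\eps)k}\log n\big)=O(n\log n)$ (using $-\ln(1-x)\ge x$, together with $\ell\le n$ and $k\ge 1$), we have $Lm=O(n^{1+1/(1+\eps)}\log n)$, and together with $O(L\cdot n\log n)$ and $O(\sigma n\log^2 n)$ this gives the claimed $O(n^{1+1/(1+\eps)}\log n+\sigma n\log^2 n)$. The step I expect to require the most care is exactly this streaming book-keeping: enforcing the cap $|\Collisions'|\le 4nL$ by a counter while fusing it with the collision enumeration so that $\Collisions'$ is never stored, and organizing the sketch computation per character so that the FFT workspace stays $O(n)$ even for polynomially large $\sigma$.
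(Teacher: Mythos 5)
Your proof is correct and follows essentially the same decomposition as the paper's (very terse) argument: the paper simply lists the time and space contribution of each line of Algorithm~\ref{alg:LSH}, and you fill in the details. Where you go beyond the paper is the per-character convolution trick for the sketch computation. If one encodes $X$ via $\mu$ and runs the FFT directly on $\mu(X)$, the workspace is $\Omega(\sigma n)$, which would break the $\Oh(n\log n)$ space bound once $\sigma = \omega(\log n)$; the paper implicitly sidesteps this by using these particular sketches only in the constant-$\sigma$ case (the arbitrary-$\sigma$ case of Theorem~\ref{th:LCS'k-decision} is handled later with the sketches of~\cite{DBLP:journals/algorithmica/KociumakaRS19}). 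Your fix — summing, for each of the $d$ rows, $\sigma$ convolutions of the $0/1$ indicator of a single character against the corresponding slice of $M$, with matrix entries regenerated from a seed — achieves the same $\Oh(\sigma n\log^2 n)$ time while keeping the reusable workspace at $\Oh(n)$ for all $\sigma$, and is a legitimate refinement. You also make explicit the bound $m=\Oh(n\log n)$, which the paper leaves implicit; your derivation via $-\ln(1-x)\ge x$ is correct. Neither addition changes the route, but both plug small holes that the paper glosses over.
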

\begin{proof}
Computing the sketches (Line~\ref{ln:sketches}) takes $\Oh(\sigma n \log^2 n)$ time and $\Oh(n \log n)$ space. Computing the collisions and choosing the collisions to test takes $\Oh(n^{1+1/(1+\eps)} \log n)$ time and $\Oh(n)$ space in total. Testing $\min \{4nL, |\Collisions^{\Hashes}_{\ell}|\}$ collisions (Line~\ref{ln:test1}) takes $\Oh(n^{1+1/(1+\eps)} \log n)$ time and constant space. Computing the Hamming distance for a random collision (Line~\ref{ln:test3}) takes $\Oh(\ell) = \Oh(n)$ time and constant space.
\end{proof}

\begin{lemma}\label{lm:hash_function_exists}
Let $S_1$ and $S_2$ be two length-$\ell$ substrings of $X$ and $Y$, resp., with $\HD(S_1, S_2) \le k$. 
If $L=\Theta(n^{1/(1+\eps)})$ is large enough, then, with probability at least $3/4$, there exists a function $h\in \Hashes$ such that $h(S_1)=h(S_2)$.
\end{lemma}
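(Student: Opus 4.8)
I want to show that for two length-$\ell$ substrings $S_1, S_2$ with $\HD(S_1,S_2)\le k$, a random hash function $h=(\pi_{a_1},\dots,\pi_{a_m})\in\Projections^m$ satisfies $h(S_1)=h(S_2)$ with probability large enough that $L=\Theta(n^{1/(1+\eps)})$ independent tries yield one such $h$ with probability at least $3/4$. The standard LSH argument goes as follows. A single random projection $\pi_a$ (with $a$ uniform in $\{1,\dots,\ell\}$) agrees on $S_1$ and $S_2$ precisely when position $a$ is not one of the (at most $k$) mismatch positions, so $\Prob[\pi_a(S_1)=\pi_a(S_2)]\ge 1-k/\ell=p_1$. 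Since the $m$ coordinates of $h$ are chosen independently, $\Prob[h(S_1)=h(S_2)]\ge p_1^m$. Then the probability that \emph{none} of the $L$ functions in $\Hashes$ is a match is at most $(1-p_1^m)^L\le \exp(-L\,p_1^m)$, which is at most $1/4$ as soon as $L\ge \ln 4\cdot p_1^{-m}$. So the whole lemma reduces to the estimate $p_1^{-m}=O(n^{1/(1+\eps)})$.

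**The key calculation.** The heart of the matter is bounding $p_1^{-m}$ where $m=\lceil\log_{p_2}(1/n)\rceil$, i.e. $p_2^m\le 1/n$ and $p_2^m> p_2/n$. Writing $\rho^* = \ln(1/p_1)/\ln(1/p_2)$, we get $p_1^{-m}=p_2^{-m\rho^*}\le (n/p_2)^{\rho^*}\cdot$ (a benign factor from the ceiling), so up to lower-order terms $p_1^{-m}=O(n^{\rho^*})$ (the $1/p_2$ and the ceiling contribute only a constant factor, since $p_2$ is bounded away from $0$ by the assumption $(1+\eps)k<\ell$, which makes $p_2>0$; one should double-check $p_2$ is bounded away from $0$ in the relevant regime — if $p_2$ can be tiny this factor needs more care, but $m$ grows correspondingly and the argument is robust). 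It therefore remains to verify $\rho^*=\ln(1/p_1)/\ln(1/p_2)\le 1/(1+\eps)$. Using $p_1=1-k/\ell$, $p_2=1-(1+\eps)k/\ell$ and the elementary inequality $\ln(1/(1-x))\ge x$ together with $\ln(1/(1-x))\le x/(1-x)$ — or, more cleanly, the concavity fact that $x\mapsto \ln(1/(1-x))$ is increasing and the ratio $\ln(1-x)/\ln(1-(1+\eps)x)$ is maximized as $x\to 0^+$, where it tends to $1/(1+\eps)$ — one gets $\rho^*\le 1/(1+\eps)$. So choosing the constant in $L=\Theta(n^{1/(1+\eps)})$ to dominate $\ln 4\cdot p_1^{-m}$ (plus the slack from the ceiling and the $1/p_2$ factor) finishes the proof.

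**The main obstacle.** The only genuinely delicate point is the monotonicity claim that $\ln(1-x)/\ln(1-(1+\eps)x)\le 1/(1+\eps)$ for all $x\in(0, \tfrac{1}{1+\eps})$, equivalently $(1+\eps)\ln(1-x)\le \ln(1-(1+\eps)x)$, i.e. $(1-x)^{1+\eps}\ge 1-(1+\eps)x$. This is exactly Bernoulli's inequality $(1-x)^{1+\eps}\ge 1-(1+\eps)x$ for exponent $1+\eps\ge 1$, so it holds for all $x\in[0,1]$, and in particular $\rho^*\le 1/(1+\eps)$ holds uniformly. Hence $p_1^{-m}=O(n^{1/(1+\eps)})$, and with $L=\Theta(n^{1/(1+\eps)})$ large enough the failure probability $(1-p_1^m)^L$ is at most $1/4$, as claimed. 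Everything else is routine: independence of the coordinates of $h$, independence of the functions in $\Hashes$, and the union-free $\exp(-Lp_1^m)$ tail bound.
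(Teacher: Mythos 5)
Your proposal follows essentially the same route as the paper's proof: a single random projection $\pi_a$ agrees on $S_1,S_2$ with probability at least $p_1 = 1-k/\ell$, so $\Prob[h(S_1)=h(S_2)]\ge p_1^m$ by independence of the $m$ coordinates; then Bernoulli's inequality $(1-x)^{1+\eps}\ge 1-(1+\eps)x$ shows $\log p_1/\log p_2 \le 1/(1+\eps)$, i.e.\ your $\rho^*\le 1/(1+\eps)$; and finally $L=\Theta(n^{1/(1+\eps)})$ independent tries drive the failure probability $(1-p_1^m)^L$ below $1/4$.

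The one loose end is the place you flag yourself: the constant factor coming from $p_2$ and the ceiling. You write $p_1^{-m}\le (n/p_2)^{\rho^*}$ and then worry that $p_2$ may be tiny, offering only the reassurance that ``the argument is robust.'' It is in fact benign, but the reason is not that $m$ grows: it is the algebraic identity
\[
p_2^{-\rho^*} = \exp\!\bigl(\rho^*\ln(1/p_2)\bigr) = \exp\!\bigl(\ln(1/p_1)\bigr) = 1/p_1,
\]
combined with the lower bound $p_1 = 1-k/\ell > 1-\tfrac{1}{1+\eps} = \tfrac{\eps}{1+\eps}$, which follows from the standing assumption $(1+\eps)k<\ell$. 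Hence $p_1^{-m}\le n^{\rho^*}/p_1 \le \tfrac{1+\eps}{\eps}\, n^{1/(1+\eps)}$, and the constant in $L$ can absorb this. The paper avoids the detour entirely by bounding $p_1^m\ge p_1\cdot n^{-\log p_1/\log p_2}$ directly (so $p_1$ itself, rather than $1/p_2$, absorbs the ceiling slack), and it explicitly records the bound $p_1>\eps/(1+\eps)$ that you need but did not state. With that one step filled in, your argument is complete and matches the paper's.
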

\begin{proof}
Consider a function $h=(\pi_{a_1},\ldots,\pi_{a_m})$ drawn from $\Projections^m$ uniformly at random. The probability of $h(S_1)=h(S_2)$ is at least $p_1^m$.
Due to $p_1 \le 1$, we have \[p_1^m = p_1^{\ceil{\log_{p_2}\frac1n}} \ge p_1^{1+\log_{p_2}\frac1n}=p_1\cdot n^{-\frac{\log p_1}{\log p_2}}.\]
Moreover, $p_1 = 1-\frac{k}{\ell}$ and $(1+\eps)k < \ell$ yield $p_1 > 1-\frac{1}{1+\eps}=\frac{\eps}{1+\eps}$,
whereas Bernoulli's inequality implies $p_2 = 1-(1+\eps)\frac{k}{\ell} \le (1-\frac{k}{\ell})^{1+\eps}=p_1^{1+\eps}$,
i.e., $\log p_2  \le (1+\eps)\log p_1$.
Therefore, \[p_1^m \ge p_1\cdot n^{-\frac{\log p_1}{\log p_2}}\ge\tfrac{\eps}{1+\eps}\cdot n^{-\frac{1}{1+\eps}}.\]
Hence, we can choose the constant in $L=|\Hashes|$ so that the claim of the lemma holds.
\end{proof}

\begin{lemma}\label{lm:bad_collisions}
If $|\Collisions^{\Hashes}_{\ell}| > 4 nL$ and $(S_1,S_2, h)$ is a uniformly random element of $\Collisions^{\Hashes}_{\ell}$, then $\Prob[\HD(S_1,S_2) \ge (1+\eps) k] \le \frac12$.
\end{lemma}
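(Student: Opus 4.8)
The plan is to bound the number $B$ of \emph{bad} collisions in $\Collisions^{\Hashes}_\ell$, i.e.\ the triples $(S_1,S_2,h)$ with $\HD(S_1,S_2)\ge(1+\eps)k$, and to show that, in expectation over the random hash family $\Hashes$, $B$ is so small that once $|\Collisions^{\Hashes}_\ell|>4nL$ the bad triples form at most a $\tfrac12$-fraction of $\Collisions^{\Hashes}_\ell$. Concretely I would prove $\mathbb{E}_{\Hashes}[B]\le nL$; then Markov's inequality gives $B\le 2nL$ with probability at least $\tfrac12$ over $\Hashes$, and on that event the probability that a uniformly random element of $\Collisions^{\Hashes}_\ell$ is bad equals $B/|\Collisions^{\Hashes}_\ell|\le 2nL/|\Collisions^{\Hashes}_\ell|<\tfrac12$ by the hypothesis $|\Collisions^{\Hashes}_\ell|>4nL$. (This $\tfrac12$-probability slack over $\Hashes$ is harmless: it is folded into the constant success probability of Algorithm~\ref{alg:LSH}.)

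The estimate on $\mathbb{E}_{\Hashes}[B]$ is a union bound. By linearity,
\[\mathbb{E}_{\Hashes}[B]=\sum_{(S_1,S_2)}\ \sum_{h\in\Hashes}\Prob\bigl[\varphi(h(S_1))=\varphi(h(S_2))\bigr],\]
where the outer sum runs over all pairs $(S_1,S_2)$ of length-$\ell$ substrings of $X$ and $Y$ with $\HD(S_1,S_2)\ge(1+\eps)k$; there are at most $n^2$ such pairs, and in fact fewer, since $\ell\ge2$ in the relevant regime. Fix one of them. For a uniformly random $h=(\pi_{a_1},\dots,\pi_{a_m})\in\Projections^m$, the event $h(S_1)=h(S_2)$ forces every one of the $m$ independently chosen coordinates $a_j$ to lie among the at most $\ell-(1+\eps)k$ positions where $S_1$ and $S_2$ agree, so $\Prob[h(S_1)=h(S_2)]\le p_2^m$, and $p_2^m\le\tfrac{1}{n}$ precisely because $m=\ceil{\log_{p_2}\frac{1}{n}}$. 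The only other way the pair can contribute is a spurious Karp--Rabin collision with $h(S_1)\ne h(S_2)$; since $h(S_1)$ and $h(S_2)$ are strings of length $m=\Oh(\ell\log n)=\Oh(n\log n)$ over $\mathbb{F}_q$ with $q=\Omega(n^5)$, this happens with probability at most $m/q=o(1/n)$. Hence each term is at most $(1+o(1))/n$, and $\mathbb{E}_{\Hashes}[B]\le nL$ with room to spare.

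Assembling the two parts yields the lemma. The step that needs the most care is the probabilistic bookkeeping: the hypothesis $|\Collisions^{\Hashes}_\ell|>4nL$ is a property of the \emph{already sampled} hash family, so one must state exactly what the asserted probability ranges over --- I read it as: with probability at least $\tfrac12$ over $\Hashes$, the conclusion $\Prob_{\text{sample}}[\HD\ge(1+\eps)k]\le\tfrac12$ holds whenever the hypothesis does, which is what the downstream analysis needs; an equally valid route is to bound $\Prob[\text{sample bad and }|\Collisions^{\Hashes}_\ell|>4nL]\le\mathbb{E}_{\Hashes}[B]/(4nL)$ directly. Tied to this is the calibration of the threshold $4nL$ against $\mathbb{E}_{\Hashes}[B]\le nL$, which is exactly what converts a vacuous ``bad fraction $\le 1$'' into ``bad fraction $\le\tfrac12$''. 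The treatment of the fingerprint term is routine but does rely on $q$ being polynomially larger than the sketch length $m$, which can be as large as $\Theta(n\log n)$.
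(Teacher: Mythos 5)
Your proof is correct and follows the same core approach as the paper's: bound the per-pair, per-hash collision probability by $p_2^m$ plus the spurious Karp--Rabin term, observe this is $\Oh(1/n)$, use linearity over the $\le n^2L$ triples to get $\mathbb{E}_{\Hashes}[B]=\Oh(nL)$ where $B$ is the number of bad triples, and then compare against the threshold $4nL$. The minor numerical difference ($\mathbb{E}[B]\le nL$ versus the paper's $2nL$, depending on whether the Karp--Rabin term is bounded by $o(1/n)$ or by $1/n$) is cosmetic.

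Where you go beyond the paper is in the probabilistic bookkeeping that you flag explicitly. The paper's proof says ``the expected number of such triples is at most $2nL$. Therefore the probability to hit a triple \dots is at most $2nL/|\Collisions^{\Hashes}_{\ell}|$,'' which silently replaces the random variable $B$ by its expectation inside a ratio whose denominator $|\Collisions^{\Hashes}_{\ell}|$ is itself a random function of $\Hashes$; as written this step is not a valid inference, since the hypothesis $|\Collisions^{\Hashes}_{\ell}|>4nL$ is an event rather than a deterministic fact. Your two proposed patches are both right: either apply Markov to $B$ and absorb the extra $\tfrac12$ failure probability into the algorithm's constant success probability, or bound the joint event directly via $\Prob\bigl[\text{sample bad}\wedge|\Collisions^{\Hashes}_{\ell}|>4nL\bigr]\le\mathbb{E}_{\Hashes}\bigl[(B/(4nL))\,\mathbf{1}_{|\Collisions^{\Hashes}_{\ell}|>4nL}\bigr]\le\mathbb{E}_{\Hashes}[B]/(4nL)\le\tfrac12$, which is cleaner and costs nothing. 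Either way the downstream use in the corollary goes through, so this is a genuine (if small) tightening of the argument rather than a deviation from it.
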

\begin{proof}
Consider length-$\ell$ substrings $S_1, S_2$ of $X, Y$, respectively, such that $\HD(S_1, S_2) \ge (1+\eps)k$, and a hash function $h$. Let us bound the probability of $(S_1,S_2, h) \in \Collisions^{\Hashes}_{\ell}$. There two possible cases: either $h(S_1) \neq h(S_2)$ but $\varphi(h(S_1)) = \varphi(h(S_2))$, or $h(S_1) = h(S_2)$. The probability of the first event is bounded by the collision probability of Karp--Rabin fingerprints, which is at most $1/n$. Let us now bound the probability of the second event. Since $\HD (S_1,S_2)\ge (1+\eps)k$, we have $\Prob [h (S_1) = h (S_2)] \le p_2^{m} \le 1/n$,  
where the last inequality follows from the definition of $m$. Therefore, the probability that for some function $h \in \Hashes$ we have $\varphi(h(S_1)) = \varphi(h(S_2))$ is at most $2/n$. 

In total, we have $n^2 |\Hashes|$ possible triples $(S_1, S_2 ,h)$ so by linearity of expectation, we conclude that the expected number of such triples is at most $\frac{2}{n} n^2 L =2n L$. Therefore the probability to hit a triple $(S_1, S_2, h)$ such that $\HD(S_1, S_2) \ge (1+\eps)k$ when drawing from $\Collisions^{\Hashes}_{\ell}$ uniformly at random is at most $2nL / |\Collisions^{\Hashes}_{\ell}| \le 2nL / 4nL = 1/2$.
\end{proof}

Below, we combine the previous results to prove that, with constant probability, Algorithm~\ref{alg:LSH} correctly solves
the decision variant of the \kApproxLCS problem.
Note that we can reduce the error probability to an arbitrarily small constant $\delta>0$: it suffices to repeat the algorithm a constant number of times. 

\begin{corollary}
With non-zero constant probability, Algorithm~\ref{alg:LSH} solves the decision variant of \kApproxLCS correctly.
\end{corollary}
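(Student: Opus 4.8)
The plan is to verify separately the three output conditions of Problem~\ref{pr:LCS'k-decision}, the middle one ($\lcsk(X,Y)<\ell\le\lcske(X,Y)$) being vacuous. Throughout I fix the dimension‑reduction parameter~$\beta$ to a constant large enough that $n^{-\beta}=o(1)$, and assume $n$ exceeds a suitable constant (small instances are solved by brute force). I also fix the ``good event'' $E_2$ obtained by applying Corollary~\ref{cor:dim_reduction} to all length-$\ell$ substrings of $X$ and $Y$ with $\alpha=\eps$: $E_2$ has probability at least $1-n^{-\beta}$, depends only on the sketching matrix, and guarantees that every pair of length-$\ell$ substrings $U,V$ satisfies $\HD(U,V)\le\norm{\sk_\eps(U)-\sk_\eps(V)}^2\le(1+\eps)\HD(U,V)$.

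\emph{Soundness} (condition~3). A YES at Line~\ref{ln:test3} is accompanied by a witness whose Hamming distance is checked character by character, so it is valid unconditionally; a YES at Line~\ref{ln:Hamming_dist} has $\norm{\sk_\eps(S_1)-\sk_\eps(S_2)}^2\le(1+\eps)k$, so on~$E_2$ the witness satisfies $\HD(S_1,S_2)\le(1+\eps)k$. Hence, whenever $\lcske(X,Y)<\ell$, no valid witness exists, so on~$E_2$ neither test fires and the algorithm outputs NO; this is correct with probability at least $1-n^{-\beta}$.

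\emph{Completeness} (condition~1). Suppose $\ell\le\lcsk(X,Y)$ and fix length-$\ell$ substrings $S_1^\star$ of $X$ and $S_2^\star$ of $Y$ with $\HD(S_1^\star,S_2^\star)\le k$. By Lemma~\ref{lm:hash_function_exists}, with probability at least $\tfrac34$ over~$\Hashes$ there is $h\in\Hashes$ with $h(S_1^\star)=h(S_2^\star)$; call this event $E_1$ (independent of the sketching matrix, hence of~$E_2$), and note that on $E_1$ we have $(S_1^\star,S_2^\star,h)\in\Collisions^{\Hashes}_\ell$. If $|\Collisions^{\Hashes}_\ell|\le 4nL$, then $\Collisions'=\Collisions^{\Hashes}_\ell$ contains this triple, and on~$E_2$ we have $\norm{\sk_\eps(S_1^\star)-\sk_\eps(S_2^\star)}^2\le(1+\eps)\HD(S_1^\star,S_2^\star)\le(1+\eps)k$, so the test of Line~\ref{ln:Hamming_dist} fires and, by soundness, produces a valid witness. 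If $|\Collisions^{\Hashes}_\ell|>4nL$, then either Test~1 has already fired (valid witness on~$E_2$) or we reach Line~\ref{ln:test3} and draw a uniformly random collision of~$\Collisions^{\Hashes}_\ell$, which by Lemma~\ref{lm:bad_collisions} has Hamming distance at most $(1+\eps)k$ — and is therefore reported — with probability at least~$\tfrac12$.

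The one subtlety, and the step I expect to be the main obstacle, is combining the two size regimes without conditioning on the possibly low‑probability event $\{|\Collisions^{\Hashes}_\ell|>4nL\}$: the guarantee underlying Lemma~\ref{lm:bad_collisions} is really the inequality $\mathbb{E}_{\Hashes}[N]\le 2nL$, where $N$ is the number of triples $(S_1,S_2,h)\in\Collisions^{\Hashes}_\ell$ with $\HD(S_1,S_2)\ge(1+\eps)k$. So I would take the expectation over~$\Hashes$ directly: for every fixed~$\Hashes$, the conditional probability of a correct YES (given~$E_2$ and that the instance is a YES‑instance) is at least $\mathbf{1}_{E_1}\bigl(1-\tfrac{N}{4nL}\bigr)$, since by the discussion above it equals~$1$ in the small regime and is at least $1-N/|\Collisions^{\Hashes}_\ell|\ge 1-N/(4nL)$ in the large regime (the reservoir draw being independent of whether Test~1 fired). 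Averaging over~$\Hashes$ yields a correct YES given~$E_2$ with probability at least $\Prob[E_1]-\mathbb{E}[N]/(4nL)\ge\tfrac34-\tfrac12=\tfrac14$, hence at least $\tfrac14(1-n^{-\beta})$ overall. Together with soundness and the vacuous middle regime, the algorithm is correct on every input with probability at least a fixed positive constant once~$n$ is large enough, which is the claim; a constant‑factor repetition then drives the error below any prescribed~$\delta>0$, as noted in the remark preceding the corollary.
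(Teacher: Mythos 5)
Your proof is correct and follows the same overall approach as the paper: Lemma~\ref{lm:hash_function_exists} for completeness, Lemma~\ref{lm:bad_collisions} for the large-$|\Collisions^{\Hashes}_\ell|$ regime, and the sketch guarantee plus the character-by-character check for soundness. The one place you go beyond the paper is the averaging step: the paper invokes Lemma~\ref{lm:bad_collisions} (whose proof bounds $\mathbb{E}_{\Hashes}[N]$ rather than the conditional distribution given $|\Collisions^{\Hashes}_\ell|>4nL$) somewhat informally, whereas your explicit inequality $\Prob[\mathrm{YES}\mid\Hashes,E_2]\ge\mathbf{1}_{E_1}\bigl(1-N/(4nL)\bigr)$ followed by taking $\mathbb{E}_{\Hashes}$ makes the combination of the two regimes rigorous without any hidden conditioning, which is a genuine (if minor) tightening of the argument.
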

\begin{proof}
Suppose first that $\ell \le \lcsk(X,Y)$, which means that there are two length-$\ell$ substrings $S_1, S_2$ of $X, Y$ such that $\HD(S_1, S_2) \le k$. By Lemma~\ref{lm:hash_function_exists}, with probability at least $3/4$, there exists a function $h\in \Hashes$ such that $h(S_1)=h(S_2)$.
In other words, $(S_1, S_2, h) \in \Collisions^{\Hashes}_{\ell}$ with probability at least $\frac34$. If $|\Collisions^{\Hashes}_{\ell}| < 4nL$, we will find this triple and it will pass the test with probability at least $1-n^{-6}$.
If $|\Collisions^{\Hashes}_{\ell}| \ge 4nL$, then by Lemma~\ref{lm:bad_collisions} the Hamming distance between $S_1, S_2$, where $(S_1, S_2, h)$ was drawn from $\Collisions^{\Hashes}_{\ell}$ uniformly at random, is at most $(1+\eps)k$ with probability $\ge 1/2$, and therefore this pair will pass the test with probability $\ge 1/2$. It follows that in this case the algorithm outputs YES with constant probability.

Suppose now that $\ell > \lcske(X,Y)$. In this case, the Hamming distance between any pair of length-$\ell$ substrings of $X$ and $Y$ is at least $(1+\eps)k$, so none of them will ever pass the second test and none of them will pass the first test with constant probability.
\end{proof}

We now improve the space of the algorithm to linear. Note that the only reason why we needed $\Oh(n \log n)$ space is that we precompute and store the sketches for the Hamming distance. Below we explain how to overcome this technicality.

First, we do not precompute the sketches. Second, we process the collisions in $\Collisions'$ in batches of size $n$. Consider one of the batches, $\mathcal{B}$. For each collision $(S_1,S_2, h) \in \mathcal{B}$ we must compute $\norm{\sk_\eps(S_1) - \sk_\eps(S_2)}^2$. 
We initialize a counter for every collision, setting it to zero initially. The number of rounds in the algorithm will be equal to the length of the sketches, and, in round $i$, the counter for a collision $(S_1, S_2, h) \in \mathcal{B}$ will contain the squared $L_2$ distance between the length-$i$ prefixes of $\sk_\eps(S_1)$ and $\sk_\eps(S_2)$. In more detail, let $\mathcal{S}$ be the set of all substrings of $X, Y$ that participate in the collisions in $\mathcal{B}$. Recall that all these substrings have length $\ell$. At round $i$, we compute the $i$-th coordinate of the sketches of the substrings in $\mathcal{S}$. By definition, the $i$-th coordinate is the dot product of the $i$-th row of $c \cdot M$, where $c$ and $M$ are as in Corollary~\ref{cor:dim_reduction}, and a substring encoded using $\mu$. Hence, we can compute the coordinate using the FFT algorithm~\cite{FischerPaterson} in $\Oh(\sigma n \log n)$ time and $\Oh(n)$ space. When we have the coordinate $i$ computed, we update the counters for the collisions and repeat.

At any time, the algorithm uses $\Oh(n)$ space.
Compared to the time consumption proven in Lemma~\ref{lm:complexity}, the algorithm spends an additional $\Oh(\sigma n^{1+1/(1+\eps)} \log^2 n)$ time for computing the coordinates of the sketches.
Therefore, in total the algorithm uses $\Oh(\sigma n^{1+1/(1+\eps)} \log^2 n) = \Oh(n^{1+1/(1+\eps)} \log^2 n)$ time and $\Oh(n)$ space. 
For constant-size alphabets, this completes the proof of Theorem~\ref{th:LCS'k-decision}. For alphabets of arbitrary size, we replace the sketches from Section~\ref{sec:prelim} with the sketches defined in~\cite{DBLP:journals/algorithmica/KociumakaRS19} to achieve the desired complexity.
We note that we could use the sketches~\cite{DBLP:journals/algorithmica/KociumakaRS19} for small-size alphabets as well, but their lengths hide a large constant.

\section{Experiments}\label{sec:implem}
We now present results of experimental evaluation of the second solution presented in Theorem~\ref{th:klcs_upper}.

\textit{Methodology and test environment.} The baselines and our solution are written in C++11 and compiled with optimizations using gcc 7.4.0. The experimental results were generated on an Intel Xeon E5-2630 CPU using 128 GiB RAM. To ensure the reproducibility of our results, our complete experimental setup, including data files, is available at \url{https://github.com/fnareoh/LCS\_Approx\_k\_mis}.

\textit{Baseline.} The only other solution to the \kApproxLCS problem was presented in~\cite{DBLP:journals/algorithmica/KociumakaRS19}, however, it has a worse complexity and is likely to be unpractical because it uses a very complex class of hash functions. We therefore chose to compare our algorithm against algorithms for the \kLCS problem. To the best of our knowledge, none of the existing algorithms has been implemented. We implemented the solution to \kLCS by Flouri et al., which we refer to as FGKU~\cite{DBLP:journals/ipl/FlouriGKU15}. (The other algorithms seem to be too complex to be efficient in practice.) The main idea of the algorithm of Flouri et al.\ is that if we know that the longest common substring with $k$ mismatches is obtained by a substring of $X$ that starts at a position $p$ and a substring of $Y$ that starts at a position $p+i$, then we can find it by scanning $X$ and $Y[i,|Y|]$ in linear time; see Algorithm~\ref{alg:FGKU} for details.

\begin{algorithm}[ht]
\caption{FGKU algorithm}
\begin{algorithmic}[1]
\State $n \gets  |X|$, $m \gets  |Y|$
\State $l \gets  0$, $r_1 \gets  0$, $r_2 \gets  0$
\For {$d \gets-m+1$ to $n-1$} 
	\State $i \gets  \max(-d,0)+d$, $j \gets  \max(-d,0)$
	\State $Q \gets  \emptyset$, $s \gets  0$, $p \gets  0$ 
	\While {$p \leq \min(n-i,m-j)-1$}
	\If {$X[i+p] \neq Y[j+p]$}
		\If {$|Q| = k$}
			\State $s \gets  \min Q + 1$
			\State {\footnotesize DEQUEUE}($Q$)
		\EndIf
		\State {\footnotesize ENQUEUE}($Q,p$)
	\EndIf
	\State $p \gets p+1$
	\If{$p-s > l$}
		\State $l \gets p-s$, $r_1 \gets i+s$, $r_2 \gets j+s$
	\EndIf
	\EndWhile
\EndFor
\end{algorithmic}
\label{alg:FGKU}
\end{algorithm}

\textit{Details of implementation.}
We made several adjustments to the theoretical algorithm we described. First, we use the fact that $A = \mathrm{LCS}(X,Y)+k \le \lcsk(X,Y) \le B = (k+1)\cdot\mathrm{LCS}(X,Y)+k$ to bound the interval in the \twentyquestions game. We also treated the number of questions in the  \twentyquestions game and~$L$, the size of the set of hash functions $\Hashes$, as parameters that trade time for accuracy, and put the number of questions to $2 \log (B-A)$ in the \twentyquestions game and $L = n^{1/(1+\eps)}/16$. In Line~\ref{ln:Hamming_dist} of Algorithm~\ref{alg:LSH}, we used sketches to estimate the Hamming distance. In practice, we computed the Hamming distance via character-by-character comparison when $\ell$ is small compared to $k$ and via kangaroo jumps otherwise~\cite{10.1145/8307.8309}. Also, when the length $\ell$ in Algorithm~\ref{alg:LSH} is smaller than $2 \log n$, we compute the hash values of the $\ell$-length substrings of $S_1$ and $S_2$ naively, instead of using the FFT algorithm~\cite{FischerPaterson}.

\begin{figure}[ht!]
\centering
    \begin{subfigure}{0.5\textwidth}
        \centering
        \captionsetup{justification=centering}
        \includegraphics[scale=0.45]{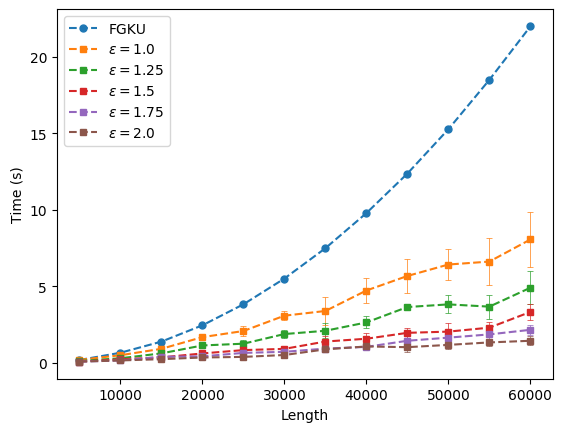}
        \caption{Random, $k = 10$}
    \end{subfigure}%
    \begin{subfigure}{0.5\textwidth}
        \centering
        \captionsetup{justification=centering}
        \includegraphics[scale=0.45]{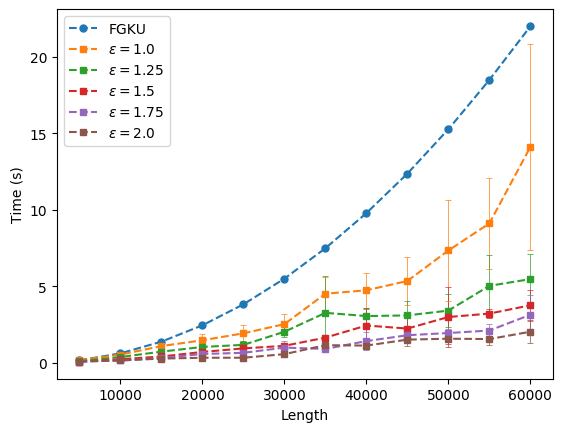}
        \caption{E. coli, $k = 10$}
    \end{subfigure}
\caption{Comparison of the FGKU algorithm versus our algorithm for $k = 10$ and different values of $\eps$. Large standard deviation for length $60000$ is caused by an outlier with very long longest common substring with $k$ mismatches.}
\label{fig:runtime_10}
\end{figure}

\begin{figure}[ht!]
\centering
    \begin{subfigure}{0.5\textwidth}
        \centering
        \captionsetup{justification=centering}
        \includegraphics[scale=0.45]{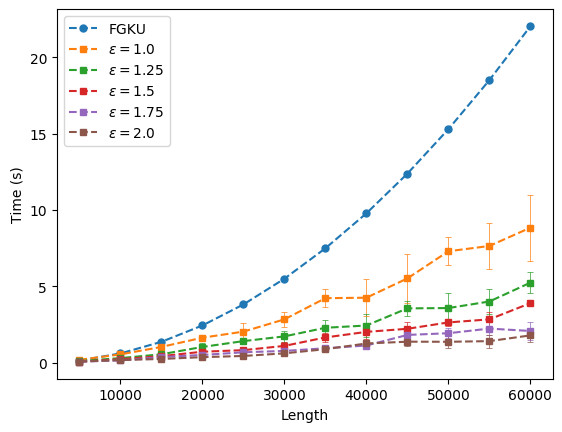}
        \caption{Random, $k = 25$}
    \end{subfigure}%
    \begin{subfigure}{0.5\textwidth}
        \centering
        \captionsetup{justification=centering}
        \includegraphics[scale=0.45]{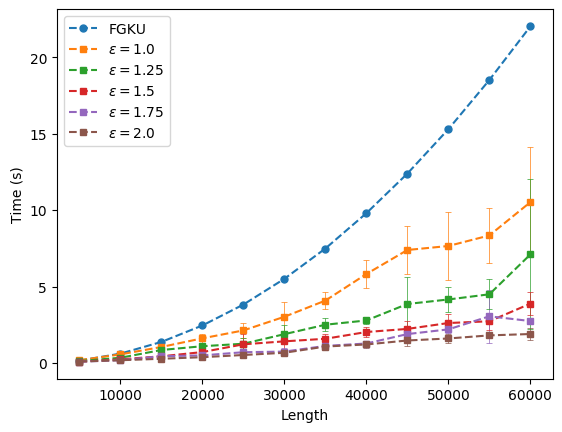}
        \caption{E. coli, $k = 25$}
    \end{subfigure}     
\caption{Comparison of the FGKU algorithm versus our algorithm for $k = 25$ and different values of $\eps$.}
\label{fig:runtime_25}
\end{figure}
 
\begin{figure}[ht!]
\centering   
    \begin{subfigure}{0.5\textwidth}
        \centering
        \captionsetup{justification=centering}
        \includegraphics[scale=0.45]{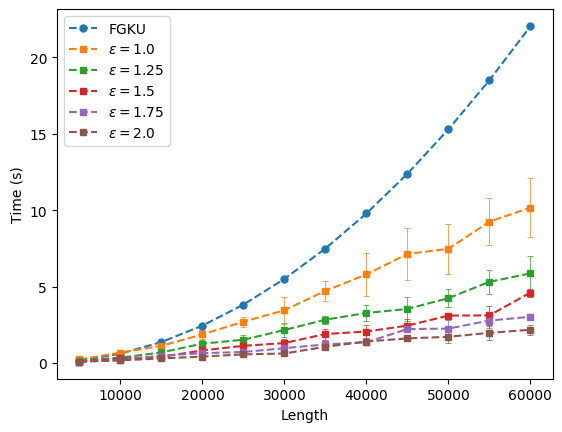}
        \caption{Random, $k = 50$}
    \end{subfigure}%
    \begin{subfigure}{0.5\textwidth}
        \centering
        \captionsetup{justification=centering}
        \includegraphics[scale=0.45]{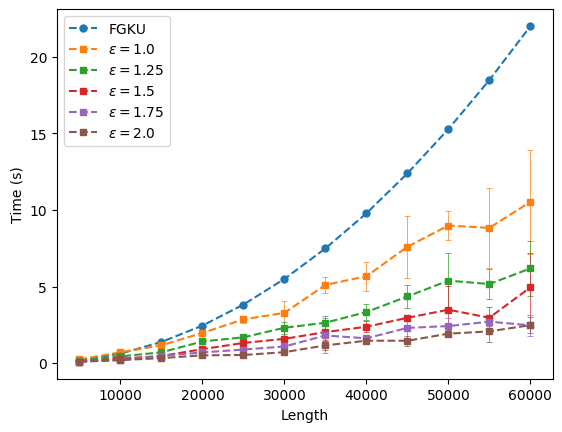}
        \caption{E. coli, $k = 50$}
    \end{subfigure}        
\caption{Comparison of the FGKU algorithm versus our algorithm for $k = 50$ and different values of $\eps$.}
\label{fig:runtime_50}
\end{figure}

\textit{Data sets and results.}
We considered $k \in \{10, 25, 50\}$ and $\eps \in \{1.0, 1.25, 1.5, 1.75, 2.0\}$. We tested the algorithms on pairs of random strings (each character is selected independently and uniformly from a four-character alphabet $\{A, T, G, C\}$) and on pairs of strings extracted at random from the E. coli genome. The lengths of the strings in each pair are equal and vary from $0$ to $60000$ with a step of $5000$. All timings reported are averaged over ten runs. Figures~\ref{fig:runtime_10}-~\ref{fig:runtime_50} show the results for $k = 10, 25, 50$. We note that for $\eps = 1$ and $k = 10, 25$, the standard deviation of the running time on the E. coli data set is quite large, which is probably caused by our choice of the method to compute the Hamming distance between substrings, but for all other parameter combinations it is within the standard range. We can see that the time decreases when $\eps$ grows, which is coherent with the theoretical complexity.

As for the accuracy, note that our algorithm cannot return a pair of strings at Hamming distance more than $(1+\eps) k$, and so the only risk is returning strings which are too short. Consequently, we measured the accuracy of our implementation by the ratio of the length $\lcsak(X, Y)$ returned by our algorithm divided by $\lcsk(X, Y)$ computed by the dynamic programming. We estimate $r_{\min}(\eps, k) = \min_{X,Y}(\lcsak(X,Y)/\lcsk(X,Y))$ and $r_{\max}(\eps, k) = \max_{X,Y}(\lcsak(X,Y)/\lcsk(X,Y))$
by computing $\lcsak$ and $\lcsk$ for $10$ pairs of strings for each length from $5000$ to $60000$ with step of $5000$, as well as the error rate, i.e. the percentage of experiments where $\lcsak(X,Y)$ is shorter than $\lcsk(X,Y)$ (see Table~\ref{tb:eps}). Not surprisingly, $r_{\min}$ and~$r_{\max}$ grow as $k$ and $\eps$ grow, while the error rate drops. Even though there is no theoretical upper bound on $r_{\max}$, the latter is at most $2.24$ at all times. We also note that even in the cases when the error rate is non-negligible, $\lcsak \ge 0.86 \cdot \lcsk$, in other words, our algorithm returns a reasonable approximation of $\lcsk$.

\newcolumntype{?}{!{\vrule width 1pt}}
\newcommand{\err}{\mathrm{err}}
\begin{center}
\begin{table}[ht!]
\center
\begin{tabular}{| c ? c | c | c | c| c| c ? c | c | c | c | c | c |}
\hline
 & \multicolumn{6}{c?}{Random} & \multicolumn{6}{c|}{E. coli} \\ 
\hline
 & \multicolumn{2}{c|}{$k = 10$} & \multicolumn{2}{|c|}{$k = 25$} & \multicolumn{2}{|c?}{$k = 50$} & \multicolumn{2}{c|}{$k = 10$} & \multicolumn{2}{c|}{$k = 25$} & \multicolumn{2}{c|}{$k = 50$} \\ 
\hline
\hline

\multirow{ 2}{*}{$\eps = 1.0$} & 0.95 & 1.41 & 1.12 & 1.46 & 1.27 & 1.54 & 0.89 & 1.34 & 0.94 & 1.48 & 0.97 & 1.59\\ 
\cline{2-13}
& \multicolumn{2}{c|}{$\err = 3\%$}  & \multicolumn{2}{c|}{$\err = 0\%$}   & \multicolumn{2}{c?}{$\err = 0\%$}   & \multicolumn{2}{c|}{$\err = 33\%$}   & \multicolumn{2}{c|}{$\err = 13\%$}  & \multicolumn{2}{c|}{$\err = 3\%$}\\ 
\hline
\multirow{ 2}{*}{$\eps = 1.25$}  & 0.97 & 1.47 & 1.15 & 1.63 & 1.44 & 1.78 & 0.88  & 1.48 & 0.98 & 1.56 & 0.99 & 1.73\\ 
\cline{2-13}
& \multicolumn{2}{c|}{$\err = 1\%$}  & \multicolumn{2}{c|}{$\err = 0\%$}   & \multicolumn{2}{c?}{$\err = 0\%$}   & \multicolumn{2}{c|}{$\err = 28\%$}   & \multicolumn{2}{c|}{$\err = 5\%$}  & \multicolumn{2}{c|}{$\err = 3\%$}\\ 
\hline
\multirow{ 2}{*}{$\eps = 1.5$}  & 1.05 & 1.57 & 1.37 & 1.76 & 1.55 & 1.91  & 0.88 & 1.45 & 0.96 & 1.67 & 0.99 & 1.89\\ 
\cline{2-13}
& \multicolumn{2}{c|}{$\err = 0\%$}  & \multicolumn{2}{c|}{$\err = 0\%$}   & \multicolumn{2}{c?}{$\err = 0\%$}   & \multicolumn{2}{c|}{$\err = 17\%$}   & \multicolumn{2}{c|}{$\err = 3\%$}  & \multicolumn{2}{c|}{$\err = 3\%$}\\  
\hline
\multirow{ 2}{*}{$\eps = 1.75$} & 1.02 & 1.69 & 1.46 & 1.86 & 1.72 & 2.12 & 0.88 & 1.58 & 0.95 & 1.84 & 1.02 & 2.15\\ 
\cline{2-13}
& \multicolumn{2}{c|}{$\err = 0\%$}  & \multicolumn{2}{c|}{$\err = 0\%$}   & \multicolumn{2}{c?}{$\err = 0\%$}   & \multicolumn{2}{c|}{$\err = 17\%$}   & \multicolumn{2}{c|}{$\err = 2\%$}  & \multicolumn{2}{c|}{$\err = 0\%$}\\ 
\hline
\multirow{ 2}{*}{$\eps = 2.0$}  & 1.10 & 1.72 & 1.59 & 2.00 & 1.89 & 2.24 & 0.91  & 1.77 & 1.01 & 2.10 & 1.00 & 2.19\\ 
\cline{2-13}
& \multicolumn{2}{c|}{$\err = 0\%$}  & \multicolumn{2}{c|}{$\err = 0\%$}   & \multicolumn{2}{c?}{$\err = 0\%$}   & \multicolumn{2}{c|}{$\err = 9\%$}   & \multicolumn{2}{c|}{$\err = 0\%$}  & \multicolumn{2}{c|}{$\err = 1\%$}\\ 
\hline
\end{tabular} 
\caption{Accuracy of the \kApproxLCS algorithm. For each $k$ and $\eps$, we show $r_{\min}(\eps, k)$, $r_{\max}(\eps, k)$, as well as the error rate.}
\label{tb:eps}
\end{table}
\end{center}

\section{Proof of Fact~\ref{lm:klcs_lower}}\label{sec:hardness}
We now show the lower bound of Fact~\ref{lm:klcs_lower} by reduction from the \Bichromatic problem.

\begin{problem}[\Bichromatic]\label{pr:bichromatic}
Given a constant $\gamma > 0$ and two sets of binary strings $(U_i)_{i \in [1,N]}$ and $(V_j)_{j \in [1,N]}$, each of length $d = \Oh(\log N)$, if the smallest Hamming distance between a pair $(U_i,V_j)_{i,j \in [1,N]}$ is $h$, we must output (possibly another) pair of binary strings with Hamming distance in $[h, (1+\gamma)h]$.
\end{problem}

Rubinstein~\cite{DBLP:journals/corr/abs-1803-00904} proved that for every $\delta$ there exists $\gamma = \gamma (\delta)$ such that any randomised algorithm that solves \Bichromatic correctly with constant probability requires $\Oh(N^{2-\delta})$ time assuming SETH:

\begin{hypothesis}[SETH]
For every $\delta > 0$, there exists an integer $q$ such that SAT on $q$-CNF formulas with~$m$ clauses and  $n$ variables cannot be solved in $m^{O(1)} 2^{(1-\delta) n}$ time even by a randomised algorithm\footnote{Impagliazzo, Paturi, and Zane~\cite{DBLP:journals/jcss/ImpagliazzoPZ01} stated the hypothesis for deterministic algorithms only, but nowadays it is common to extend SETH to allow randomisation. If we condition on the classic version of the hypothesis, we will obtain a lower bound for deterministic algorithms. See~\cite{SETH_survey} for more discussion.}.
\end{hypothesis}

We show the lower bound by reducing from \Bichromatic to a polylogarithmic number of instances of \kApproxLCS. 

\subparagraph*{Quaternary alphabet. } 
First, we show a proof for an alphabet of four characters $\{0,1,a,b\}$, and then explain how to adapt the proof for the binary alphabet. We assume that $U_i$, $V_j$ are over the alphabet $\{0,1\}$. Let us introduce $H=(a^d b)^{d+1}$, and construct $X = H U_1 H U_2 H \ldots H U_N H$ and $Y = H V_1 H V_2 H \ldots H V_N H$.

\begin{observation}\label{obs:bichro_lcs}
If there exist $i,j \in [1,N] $ such that $\HD(U_i,V_j) \leq k$, then  $\lcsk(X,Y) \geq 2(d+1)^2 + d$.
\end{observation}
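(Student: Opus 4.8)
\textbf{Proof plan for Observation~\ref{obs:bichro_lcs}.}
The plan is to exhibit an explicit pair of length-$(2(d+1)^2+d)$ substrings of $X$ and $Y$ that are at Hamming distance at most $k$. Suppose $\HD(U_i,V_j)\le k$. The natural candidate is the substring of $X$ consisting of the full copy of $H$ immediately preceding $U_i$, followed by $U_i$, followed by the full copy of $H$ immediately following $U_i$ — that is, the factor $H\,U_i\,H$ of $X$ — and, symmetrically, the factor $H\,V_j\,H$ of $Y$. First I would record the lengths: $|H| = (d+1)^2$ since $H = (a^db)^{d+1}$ is the concatenation of $d+1$ blocks each of length $d+1$, and $|U_i| = |V_j| = d$, so each of these two substrings has length $2(d+1)^2 + d$, matching the bound.

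Next I would align these two substrings position-for-position (both have the shape $H$, then a $\{0,1\}$-block of length $d$, then $H$) and count mismatches. The first $|H|$ positions agree because both strings have $H$ there; likewise the last $|H|$ positions agree; and the middle $d$ positions contribute exactly $\HD(U_i,V_j)\le k$ mismatches. Hence the total Hamming distance of this aligned pair is at most $k$, so these are length-$(2(d+1)^2+d)$ substrings of $X$ and $Y$ respectively that witness $\lcsk(X,Y)\ge 2(d+1)^2+d$, which is exactly the claim.

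The only subtlety — and the step I would be most careful about — is making sure the indices are in range so that the copy of $H$ before $U_i$ and the copy of $H$ after $U_i$ genuinely exist as substrings of $X$. This is guaranteed by the construction $X = H U_1 H U_2 H \cdots H U_N H$: every block $U_i$ is flanked on both sides by a full copy of $H$ (including $U_1$ and $U_N$, which are flanked by the leading and trailing $H$ respectively), and the same holds for $Y$. So for every $i,j\in[1,N]$ the factors $HU_iH$ and $HV_jH$ are well-defined, and the argument goes through without boundary issues. No part of this requires more than elementary length bookkeeping, so I do not anticipate a genuine obstacle; the $H = (a^db)^{d+1}$ padding is designed precisely so that optimal alignments cannot be improved by shifting, which is what later observations (bounding $\lcsk$ from above) will exploit, but for this lower-bound direction a single explicit alignment suffices.
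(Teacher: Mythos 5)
Your proof is correct and takes essentially the same route as the paper: both exhibit the factors $HU_iH$ of $X$ and $HV_jH$ of $Y$, note they agree on the two flanking copies of $H$, and observe that the middle block contributes at most $k$ mismatches, giving $\lcsk(X,Y)\ge |HU_iH| = 2(d+1)^2+d$. You simply spell out the length bookkeeping and the boundary check, which the paper leaves implicit.
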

\begin{proof}
If $\HD(U_i,V_j) \leq k$ for some $i, j$, then $\HD(HU_iH,HV_jH)  \leq k$ and $\lcsk(X,Y) \geq |H U_i H| = 2(d+1)^2 + d$.
\end{proof}

\begin{lemma} \label{lm:bichro_lcs}
For a given $k \leq d/(1+\eps)$, if the algorithm for \kApproxLCS outputs a substring $S_1$ of $X$ and a substring $S_2$ of $Y$ of length $\geq 2(d+1)^2 + d$, then there exist $i,j \in [1,N] $ such that $\HD(U_i,V_j) \leq (1+\eps)k$.
\end{lemma}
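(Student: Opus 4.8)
The plan is to show that any length-$(\geq 2(d+1)^2+d)$ common substring (with $\leq (1+\eps)k$ mismatches) of $X$ and $Y$ must align an entire block $HU_iH$ with an entire block $HV_jH$ in a position-consistent way, forcing $\HD(U_i,V_j)\leq (1+\eps)k$. The key structural fact is that the separator $H=(a^db)^{d+1}$ is long enough that a substring of length $\geq 2(d+1)^2+d = |HU_iH|$ must overlap at least two full occurrences of the $a^db$ pattern inside some $H$-block, and — because $U_i,V_j$ are over $\{0,1\}$ while $H$ uses only $\{a,b\}$ — the positions of the $a$'s and $b$'s cannot be "faked" cheaply by the other string.

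First I would set up coordinates: write $S_1 = X[p_1, p_1+\ell-1]$ and $S_2 = Y[p_2,p_2+\ell-1]$ with $\ell \geq 2(d+1)^2+d$, and consider the offset $\Delta$ between the alignments induced on $X$ and $Y$. I would argue that if the alignment is \emph{not} block-synchronized — i.e., if the $H$-blocks of $X$ covered by $S_1$ are not aligned with the $H$-blocks of $Y$ covered by $S_2$ with the same internal offset — then a large number of positions inside one of the $H$-segments get matched against shifted copies of $H$ or against $\{0,1\}$-symbols. The combinatorial heart is to show that $H$ has no short period and, more strongly, that comparing $H$ against any nontrivial cyclic shift of itself, or against a string that is forced to contain many $0$/$1$ symbols, yields more than $(1+\eps)k$ mismatches over a window of length $\leq 2(d+1)^2+d$. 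Since $H$ consists of $d+1$ copies of the length-$(d+1)$ word $a^db$, any shift by an amount not a multiple of $d+1$ destroys the $b$-positions and creates $\Omega(d)$ mismatches, and a shift by a nonzero multiple of $d+1$ is ruled out because it would push a $U$- or $V$-block against an $H$-block, again creating $\Omega(d)$ mismatches from the alphabet disjointness. Using $k\leq d/(1+\eps)$, i.e. $(1+\eps)k \leq d$, these $\Omega(d)$ counts exceed the budget for a suitable choice of constants in $|H|$.

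Once block-synchronization is established, $S_1$ and $S_2$ restricted to the (unique) $U$-block and $V$-block they fully or partially span must line up with equal internal offsets. Because $\ell \geq 2(d+1)^2+d = |H|\cdot?$ — more precisely $\ell$ is large enough to contain a full $H$ on each side of a full $U_i$ — the window must in fact cover some complete $HU_iH$ opposite some complete $HV_jH$ (here I would use that the only way to fit $2(d+1)^2+d$ synchronized characters is to capture a whole middle block flanked by two whole separators, since two consecutive separators are distance $|H|+d$ apart and $2|H|+d$ is exactly the span). Then $\HD(HU_iH, HV_jH) \leq \HD(S_1,S_2) \leq (1+\eps)k$ over the matched window, and since the $H$-parts contribute $0$ mismatches (they are identical), we get $\HD(U_i,V_j)\leq (1+\eps)k$, as required.

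The main obstacle I expect is the case analysis on the offset $\Delta$ and proving the "$\Omega(d)$ mismatches" lower bounds cleanly: one must be careful that a misaligned window could partly cover a $U$-block on one side and an $H$-block on the other, and bound mismatches coming from \emph{both} the disrupted $a^db$ periodicity and the alphabet clash $\{0,1\}$ vs $\{a,b\}$. The clean way is to exhibit, for each bad offset, a set of $\ge (1+\eps)k+1$ positions within the length-$(2(d+1)^2+d)$ window that are guaranteed mismatches — e.g. pick the $b$-positions of one full inner $H$ and show the aligned character in the other string is never $b$ (it is either $a$, from a shift, or $0$/$1$, from a block clash). Everything else (turning one matched window into a statement about $\HD(U_i,V_j)$, handling the $\geq$ vs $=$ in the length bound) is routine once the synchronization lemma is in place.
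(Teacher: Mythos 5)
Your high-level plan — pin down the alignment offset between the $H$-block structures of $X$ and $Y$, show that a nonzero offset forces more than $(1+\eps)k$ mismatches, and then read off $\HD(U_i,V_j)\leq(1+\eps)k$ from a synchronized window — is the same strategy the paper uses. But the case split you propose and the mismatch-counting it relies on both have genuine gaps.

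First, your dichotomy "offset not a multiple of $d+1$" vs.\ "offset a nonzero multiple of $d+1$" is not the right partition, and the proposed clean witness set (the $d+1$ $b$-positions of one full $H$) does \emph{not} give $\geq d+1$ guaranteed mismatches in the first case. The periodic structure $(a^d b)^{d+1}$ is only locally periodic: across each length-$d$ block $U_i$, the phase of the $b$'s shifts by $d\equiv -1\pmod{d+1}$. So a window of $X$ consisting of the tail of one $H$, then $U_i$, then the head of the next $H$, seen against the single contiguous $H$ of $Y$, can have $b$'s aligning with $b$'s on one side of $U_i$ even when the offset is \emph{not} a multiple of $d+1$. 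Concretely, take $s=(d+1)^2-1$: the $Y$-side $H$ is aligned with the last character $b$ of one $H$ in $X$, then all of $U_i$, then the first $(d+1)^2-d-1$ characters of the next $H$; all $b$'s of the $Y$-side $H$ except the first one then align with $b$'s in the next $H$ of $X$, so your chosen witness set yields a single guaranteed mismatch, far short of $d+1$. The paper instead splits on whether the aligned $H$ fits entirely before, entirely after, or across $U_i$ (i.e., $s<d+1$ or $s>(d+1)^2$ vs.\ $d+1\le s\le (d+1)^2$), and in the first regime the $b$-argument works, while in the second the $d$ positions of $U_i$ are the witness, not the $b$'s.

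Second, the "$\Omega(d)$" bound in your multiple-of-$(d+1)$ case is not enough. A single $U$-block clash yields exactly $d$ mismatches, and since $(1+\eps)k\le d$ is allowed with equality, $d$ is not strictly above the budget: you need one more. The paper is careful to produce that extra mismatch by observing that the residual $(d+1)^2-d$ characters of the $Y$-side $H$ cannot perfectly re-synchronize with the two surrounding $X$-side $H$'s once the phase has been shifted by $d$ across $U_i$; this gives $d+1>(1+\eps)k$. Your proposal leaves that off-by-one uncovered.

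Finally, the assertion that a synchronized length-$(\geq 2(d+1)^2+d)$ window must capture a whole $HU_iH$ opposite a whole $HV_jH$ is overclaimed: a window of exactly that length starting mid-$H$ need not contain any full $HU_iH$. This is repairable — the right statement, and the one the paper uses, is that the window fully contains some $HV_j$ or $V_jH$, that the $H$ part is aligned with a full $H$ of $X$, and that the neighbouring $U_i$/$V_j$ are then forced to be aligned inside the window, directly yielding $\HD(U_i,V_j)\le(1+\eps)k$ — but as written, your last step does not follow from what you proved.
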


\begin{proof}
By the assumption of the lemma, $|S_1| = |S_2| \geq 2(d+1)^2 + d$. $S_2$ contains either $HV_j$ or $V_jH$ for some $j$. W.l.o.g., we can assume that $S_2$ contains a copy of $H$ followed by $V_j$ for some $j$. Let us consider the substring $S$ of $X$ the copy of $H$ is aligned with. Below we will prove that $S = H$ and since it is followed by~$U_i$ for some $i$, it will imply that $\HD(HU_iH,HV_jH)  \leq (1+\eps)k$.

\begin{figure}
\definecolor{mygray}{gray}{0.6}
\definecolor{lessgray}{gray}{0.85}

\begin{center}
{\footnotesize
\begin{tikzpicture}[scale=0.6]
  \filldraw[lessgray] (0,0) rectangle (4,1);
  \filldraw[mygray] (4,0) rectangle (5,1);
  \filldraw[lessgray] (5,0) rectangle (9,1);
  \filldraw[mygray] (9,0) rectangle (10,1);
  \filldraw[lessgray] (10,0) rectangle (14,1);
  
  \draw (-1.5,0.5) node {$X$};
  \draw (-0.5,0.5) node {$\ldots$};
  \draw (14.5,0.5) node {$\ldots$};
  
  \draw (2,0.5) node {\small{$H$}};
  \draw (4.5,0.5) node {\small{$U_{i-1}$}};
  \draw (7,0.5) node {\small{$H$}};
  \draw (9.5,0.5) node {\small{$U_i$}};
  \draw (12,0.5) node {\small{$H$}};
  
  
  \filldraw[lessgray, xshift = 2.5 cm, yshift= -2.5 cm] (0,0) rectangle (4,1);
  \filldraw[mygray, xshift = 2.5 cm, yshift= -2.5 cm] (4,0) rectangle (5,1);
  \filldraw[lessgray, xshift = 2.5 cm, yshift= -2.5 cm] (5,0) rectangle (9,1);
  \filldraw[mygray, xshift = 2.5 cm, yshift= -2.5 cm] (9,0) rectangle (10,1);
  \filldraw[lessgray, xshift = 2.5 cm, yshift= -2.5 cm] (10,0) rectangle (14,1);
  
  \draw[xshift = 2.5 cm, yshift= -2.5 cm] (-1.5,0.5) node {$Y$};
  \draw[ xshift = 2.5 cm, yshift= -2.5 cm] (-0.5,0.5) node {$\ldots$};
  \draw[ xshift = 2.5 cm, yshift= -2.5 cm] (14.5,0.5) node {$\ldots$};

  \draw[xshift = 2.5cm, yshift= -2.5 cm] (4.5,0.5) node {\small{$V_{j-1}$}};
  \draw[xshift = 2.5cm, yshift= -2.5 cm] (7,0.5) node {\small{$H$}};
  \draw[xshift = 2.5cm, yshift= -2.5 cm] (9.5,0.5) node {\small{$V_j$}};
  \draw[xshift = 2.5cm, yshift= -2.5 cm] (12,0.5) node {\small{$H$}};
  
  \draw [decorate,decoration={brace,amplitude=0.25cm}](4.5,1) -- (13.5,1) node [black,midway, yshift=0.5cm]{$S_1$};
  \draw [decorate,decoration={brace,amplitude=0.25cm, mirror}, yshift= -3.5 cm](4.5,1) -- (13.5,1) node [black,midway, yshift=-0.5cm]{$S_2$};
  \draw [decorate,decoration={brace,amplitude=0.3cm,mirror}](7.5,0) -- (11.5,0) node [black,midway, yshift=-0.5cm]{$S$};
  \draw [dashed] (7.5,1) -- (7.5,-2.5);
  \draw [dashed] (11.5,1) -- (11.5,-2.5);
  \draw[<->] (5,-0.25) -- (7.5,-0.25);
  \draw[black] (6.25,-0.5) node {$s$};

\end{tikzpicture}

}
\end{center}
\caption{Substrings $S_1$ and $S_2$ of $X$ and $Y$, respectively, substring $S$ aligned with a copy of $H$ in $S_2$, and the shift $s$.}
\end{figure}
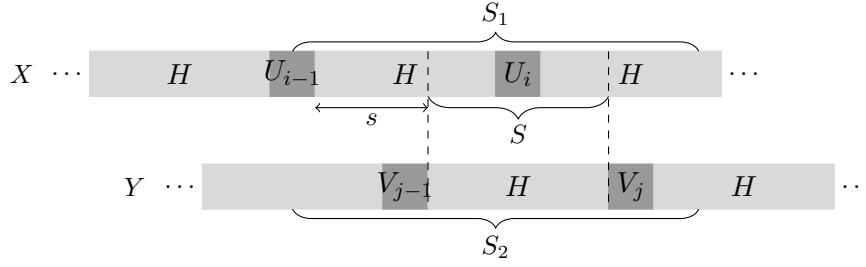

Assume $S \neq H$, and let $0 < s < (d+1)^2 + d$ be the distance between the starting positions of $S$ and the nearest copy of $H$ from the left. If $s < d+1$ or $(d+1)^2 < s$, then all occurrences of $b$ in $H$ create a mismatch. There are $d+1 > (1+\eps)k$ of them, a contradiction. If $d+1 \leq s \le (d+1)^2$, then $H$ covers $U_i$ creating $d$ mismatches. Since $|U_i| = d$ and $|H|=(d+1)^2$, we will have at least one more mismatch from the alignment of the copy of $H$ in $Y$ and the copies of $H$ in $X$ that surround $U_i$. Therefore, in total there are at least $d+1 > (1+\eps)k$ mismatches, a contradiction. To conclude, both cases are impossible and hence $s = 0$. The lemma follows as explained above.
\end{proof}
With this lemma, we can now proceed to prove Fact~\ref{lm:klcs_lower}.
Define $\eps = \gamma/3$ and consider all $k = 1, (1+\eps), (1+\eps)^2, (1+\eps)^3, \ldots$ until $d/(1+\eps)$. For each $k$, we run $\log \log n$ independent instances of an algorithm for \kApproxLCS. Let $k_0$ be the smallest $k$ such that the longest common substring with approximately $k$ mismatches has length at least $2(d+1)^2 + d$.

By the definition of $k_0$, Observation~\ref{obs:bichro_lcs} and Lemma~\ref{lm:bichro_lcs}, there do not exist $i,j \in [1,N]$ such that $\HD(U_i,V_j) \leq k_0/(1+\eps)$ but there exist $i,j \in [1,N]$ such that $\HD(U_i,V_j) \leq k_0(1+\eps)$.
In the \Bichromatic problem, this translates to $ k_0/(1+\eps) < h \leq k_0(1+\eps)$, where $h$ is the minimal distance between all pairs $U_i,V_j$.
This is equivalent to
\[h \leq k_0(1+\eps) < h(1+\eps)^2 = h(1+\tfrac23\gamma + \tfrac19\gamma^2) \leq h(1+\gamma),\]
which means that the pair $(U_i,V_j)$ found by the algorithm for $k_0$ is a valid solution for \Bichromatic. It follows that for some $k$, the algorithm for \kApproxLCS must spend $\Omega(N^{2-\delta} / \log_{1+\eps} \log N)$ time. We have $n = |X| = |Y| = \Oh(d^2 N) = \Oh(N \log^2 N)$, which implies $N = \Omega(n / \log^2 n)$. Fact~\ref{lm:klcs_lower} follows.

\vspace*{-10pt}
\subparagraph*{Binary alphabet. } 
To prove the lower bound for the binary alphabet, it suffices to show an analogue of Observation~\ref{obs:bichro_lcs} and Lemma~\ref{lm:bichro_lcs}; the rest follows as above. To this end, we introduce a morphism $\mu$ defined as $\mu(0)= 0001000$, $\mu(1)= 0011000$, $\mu(a)= 1001000$ and $\mu(b)= 1101000$.

\begin{observation}\label{obs:11}
Distances $\HD(\mu(0),\mu(a))$, $\HD(\mu(0),\mu(b))$, $\HD(\mu(1),\mu(a))$, $\HD(\mu(1),\mu(b))$ are at least $1$, and $\HD(\mu(0),\mu(1))=1$.
\end{observation}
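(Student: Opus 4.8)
The statement is a finite verification, so the plan is simply to compute the five relevant Hamming distances directly, organising the case work to keep it minimal. First I would observe that all four codewords share the common suffix $1000$ of length four, so for any two of them $\HD$ equals the number of mismatches among their first three characters, which are $000$, $001$, $100$, $110$ for $\mu(0)$, $\mu(1)$, $\mu(a)$, $\mu(b)$, respectively. This reduces the whole claim to a three-bit comparison.

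Next I would dispatch the four ``cross'' pairs, namely those in $\{\mu(0),\mu(1)\}\times\{\mu(a),\mu(b)\}$, all at once: the first of the three relevant bits is $0$ for both $\mu(0)$ and $\mu(1)$ and $1$ for both $\mu(a)$ and $\mu(b)$, so every such pair already disagrees in its first position and therefore has Hamming distance at least $1$. Finally, for $\mu(0)$ versus $\mu(1)$ the three-character prefixes $000$ and $001$ differ in exactly one position (the third), and since the suffixes coincide this gives $\HD(\mu(0),\mu(1))=1$ exactly.

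There is no genuine obstacle here; the only point worth a moment's care is that for the cross pairs only the lower bound $\ge 1$ is claimed (the exact values happen to be $1,2,2,3$), so exhibiting a single disagreeing coordinate suffices and the exact distances need not be computed. The purpose of this observation downstream is precisely this separation: it certifies that a $\{0,1\}$-symbol and an $\{a,b\}$-symbol cannot be aligned ``for free'' after applying $\mu$, which is exactly what is needed so that the synchronising block $H$ still acts as a barrier in the binary image and the binary analogues of Observation~\ref{obs:bichro_lcs} and Lemma~\ref{lm:bichro_lcs} go through verbatim.
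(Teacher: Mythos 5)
Your verification is correct, and the paper itself gives no proof of this observation (it is stated as an unproved, directly checkable fact), so there is nothing to contrast with. Your organisation — peeling off the common suffix $1000$, reducing to the three-bit prefixes $000,001,100,110$, and noting that $\{0,1\}$-symbols and $\{a,b\}$-symbols are separated already by the first bit — is the clean way to present the brute-force check, and your incidental remark that the exact cross-pair distances are $1,2,2,3$ is accurate.
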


We redefine $X$ as $\mu(X)$ and $Y$ as $\mu(Y)$. We will show that for a given $k \leq d/(1+\eps)$, if there exist $i,j \in [1,N] $ such that $\HD(U_i,V_j) \leq k$, then $\lcsk(X,Y) \geq 14(d+1)^2 + 7d$, and if the algorithm for \kApproxLCS outputs outputs a pair of substrings of length $\geq 14(d+1)^2 + 7d$, then there exist $i,j \in [1,N] $ such that $\HD(U_i,V_j) \leq (1+\eps)k$. Since  $\HD(\mu(0),\mu(1))=1$, the first part of the claim follows similar to Observation~\ref{obs:bichro_lcs}. To show the second part of the claim, we will need the following observation:

\begin{observation}\label{obs:1000}
Let $x,y,z \in \{0,1\}$. The string $1000$ has exactly two occurrences in $1000xyz1000$.
\end{observation}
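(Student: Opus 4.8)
\textbf{Proof plan for Observation~\ref{obs:1000}.}

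The plan is to analyze the occurrences of the pattern $1000$ inside the $10$-character string $W = 1000xyz1000$ by case analysis on the starting position. Since $|1000| = 4$ and $|W| = 10$, a candidate occurrence can start at any position $i \in \{1, 2, \ldots, 7\}$. The positions $i = 1$ and $i = 7$ obviously yield occurrences (they are the two literal copies of $1000$ built into $W$), so it remains to rule out $i \in \{2,3,4,5,6\}$.

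First I would note that $W[2,4] = 000$, which immediately kills $i = 2$ (we would need $W[2] = 1$) and $i = 3, 4$ as well, since an occurrence of $1000$ starting at position $3$ or $4$ would need a $1$ at position $3$ or $4$ respectively, but $W[3] = W[4] = 0$. For $i = 5$: the occurrence would read $W[5,8] = xyz\,W[8]$, and for this to equal $1000$ we would need $x = 1$, $y = z = 0$, and $W[8] = 0$; but $W[8] = 1$ (it is the first character of the trailing $1000$), a contradiction. For $i = 6$: the occurrence would read $W[6,9] = yz\,W[8]\,W[9] = yz10$, which cannot equal $1000$ because its third character is $1$, not $0$. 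This exhausts all cases and shows $1000$ occurs exactly at positions $1$ and $7$, i.e.\ exactly twice.

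The only mild subtlety — and the step I would be most careful with — is the interaction between the free characters $xyz$ and the fixed trailing block $1000$ at positions $i = 5$ and $i = 6$, where an occurrence would straddle the boundary; the key point exploited there is that the trailing block begins with $1$ at position $8$, which blocks any straddling occurrence regardless of the values of $x, y, z$. Everything else is a direct inspection of fixed characters. After establishing this observation, one uses it (as the morphism $\mu$ places the block $1000$ as a suffix of the image of every character, with leading bits $1$ only for $\mu(a), \mu(b)$ and $0$ for $\mu(0), \mu(1)$) to control how a long common substring of $\mu(X)$ and $\mu(Y)$ must be aligned, recovering the binary analogue of Lemma~\ref{lm:bichro_lcs} by the same shift argument as in the quaternary case.
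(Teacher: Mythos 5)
Your plan — a brute-force case analysis over starting positions, exploiting that the trailing block begins with a $1$ — is exactly the right (and essentially the only) way to verify this observation, and the paper itself leaves it as an unproved inspection. However, your write-up has a sequence of off-by-one errors that make the stated proof incomplete. The string $W = 1000xyz1000$ has length $4+3+4 = 11$, not $10$; the candidate starting positions are therefore $i \in \{1,\ldots,8\}$, not $\{1,\ldots,7\}$; and the second built-in occurrence of $1000$ sits at $i=8$, not $i=7$. As a result, $i=7$ is not an occurrence — it is one more position that needs to be \emph{ruled out} — and you never check it. Notice also that your own detailed arguments for $i=5$ and $i=6$ (``$W[8]=1$ is the first character of the trailing $1000$'', ``$W[6,9]=yz10$'') silently use the correct indexing where the trailing block occupies positions $8$--$11$, so the proof as written is internally inconsistent with its own preamble.

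The fix is small: take $|W|=11$, candidate positions $i\in\{1,\ldots,8\}$, with occurrences at $i=1$ and $i=8$; rule out $i\in\{2,3,4\}$ because $W[2]=W[3]=W[4]=0$, and rule out $i\in\{5,6,7\}$ uniformly because each such window contains position $8$, where $W[8]=1$ would have to match a $0$ of the pattern (for $i=5$: fourth character; $i=6$: third; $i=7$: second). With that correction the case analysis is exhaustive and the observation follows. Your closing remark about how the observation is subsequently used to force the shift $s \equiv 0 \pmod 7$ in the binary analogue of Lemma~\ref{lm:bichro_lcs} is accurate.
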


Assume that the algorithm for \kApproxLCS outputs a substring $S_1$ of $X$ and a substring $S_2$ of $Y$.
We have $|S_1| = |S_2| \geq 14(d+1)^2 + 7d$. W.l.o.g., we can assume that $S_2$ contains a copy of $\mu(H)$ followed by $\mu(V_j)$ for some $j$. Let us consider the substring $S$ of $X$ the copy of $\mu(H)$ is aligned with. Below we will prove that $S = \mu(H)$ and since it is followed by $\mu(U_i)$ for some $i$, it will imply that $\HD(\mu(HU_iH),\mu(HV_jH))  \leq (1+\eps)k$ and consequently $\HD(HU_iH,HV_jH)  \leq (1+\eps)k$ by Observation~\ref{obs:11}.

We define the shift $s$ analogous of Lemma~\ref{lm:bichro_lcs}. We will prove that $s = 0 \pmod 7$, which implies that the image of any character in $\mu(H)$ is aligned with an image of some character in $S$. The claim will immediately follow analogous to Lemma~\ref{lm:bichro_lcs}.
Consider all $(d+1)^2$ occurrences of $1000$ in $\mu(H)$. We claim that they must be aligned with occurrences of $1000$ in $S$. If this is not the case, by Observation~\ref{obs:1000}, each occurrence of $1000$ accounts for at least one mismatch and therefore there are at least $(d+1)^2$ mismatches in total, a contradiction.

\bibliography{main}

\end{document}